\newcommand*{\rom}[1]{\expandafter\@slowromancap\romannumeral #1@}
\newtheorem{remark}{Remark}
\newcommand{\x}{\bm{x}}
\newcommand{\y}{\bm{y}}
\newcommand{\A}{\bm{A}}
\newcommand{\W}{\bm{W}}
\newcommand{\I}{\bm{I}}
\renewcommand{\H}{\bm{H}}
\renewcommand{\L}{\bm{L}}
\renewcommand{\S}{\bm{S}}
\newcommand{\U}{\bm{U}}
\newcommand{\ub}{\bm{u}}
\newcommand{\xb}{\bm{x}}
\newcommand{\bLambda}{\bm{\Lambda}}
\newcommand{\Ocal}{\mathcal{O}}
\newcommand*{\transp}{\mathsf{T}}
\newcommand*{\herm}{\mathsf{H}}
\newcommand{\hx}{\hat{\bm{x}}}
\newcommand{\hy}{\hat{\bm{y}}}
\newcommand{\bphi}{\bm{\phi}}
\newcommand{\xs}[1]{\mathcal{#1}}
\def\Phib{\bm\Phi}
\def\Phia{\Phib_{\A}}
\def\Tb{\bm T}
\def\Hb{\bm H}
\def\Bb{\bm B}
\def\bb{\bm b}
\def\Ub{\bm U}
\def\Sb{\bm S}
\def\Hb{\bm H}
\def\J{\mathcal{J}}
\def\Bs{\Bb_{\Ub}^{\A}}
\def\bs{\bb_{\Ub,i}^{\A}}
\def\Js{\J_{\Ub}^{\A}}
\def\A{\mathcal{A}}
\def\alphab{\boldsymbol \alpha}
\def\lambdab{\boldsymbol \lambda}
\newcommand{\Hev}{\bm H_{\rm ev}}
\newcommand{\card}[1]{\vert\mathcal{#1}\vert}
\newcommand{\diagg}[1]{{\rm diag}(#1)}
\newcommand{\sran}[1]{{\rm span}\{#1\}}
\newcommand{\snull}[1]{{\rm null}\{#1\}}
\newtheorem{mydef}{Definition}%
\newtheorem{myprop}{Proposition}%
\begin{document}

\title{Advances in Distributed Graph Filtering}

\author{Mario~Coutino, \emph{Student Member, IEEE}, Elvin~Isufi,  \emph{Student Member, IEEE},  \\Geert~Leus, \emph{Fellow, IEEE}% <-this % stops a space
\thanks{The authors are with the faculty of Electrical Engineering, Mathematics and Computer Science, Delft University of Technology, 2826 CD Delft, The Netherlands. E-mails: {$\{$m.a.coutinominguez, e.isufi-1, g.j.t.leus$\}$@tudelft.nl}. This research is supported in part by the ASPIRE project (project 14926 within the STW OTP programme), financed by the Netherlands Organization for Scientific Research (NWO). Mario Coutino is partially supported by CONACYT. A conference version of this work was presented in \cite{coutino2017distributed}.}
}

%\markboth{Draft - Technical Report to submit to Information-Theoretic Methods in Data Acquisition, Analysis, and Processing}

\maketitle

\begin{abstract}
%The ultimate proof of understanding the signal evolution over a graph is reflected in our ability to filter it.

Graph filters are one of the core tools in graph signal processing. A central aspect of them is their direct distributed implementation. However, the filtering performance is often traded with distributed communication and computational savings. To improve this tradeoff, this work generalizes state-of-the-art distributed graph filters to filters where every node weights the signal of its neighbors with different values while keeping the aggregation operation linear. This new implementation, labeled as edge-variant graph filter, yields a significant reduction in terms of communication rounds while preserving the approximation accuracy. In addition, we characterize the subset of shift-invariant graph filters that can be described with edge-variant recursions. By using a low-dimensional parametrization the proposed graph filters provide insights in approximating linear operators through the succession and composition of local operators, i.e., fixed support matrices, which span applications beyond the field of graph signal processing. A set of numerical results shows the benefits of the edge-variant filters over current methods and illustrates their potential to a wider range of applications than graph filtering.
%
%
%The analytical and numerical results presented in this paper illustrate the potential and benefits of the general family of edge-variant graph filters.
\end{abstract}

\begin{IEEEkeywords}
consensus, distributed beamforming, distributed signal processing, edge-variant graph filters, FIR, ARMA, graph filters, graph signal processing.
\end{IEEEkeywords}
%===============================================================================
\section{Introduction}
\label{sec:int}

\IEEEPARstart{F}{iltering} is one of the core operations in signal processing. The necessity to process large amounts of data defined over non-traditional domains characterized by a graph triggers advanced signal processing of the complex data relations embedded in that graph. Examples of the latter include biological, social, and transportation network data. The field of graph signal processing (GSP)~\cite{taubiny2000geometric, shuman2013emerging,sandryhaila2013discrete} has been established to incorporate the underlying structure in the processing techniques.

Through a formal definition of the graph Fourier transform (GFT), harmonic analysis tools employed for filtering in traditional signal processing have been adapted to deal with signals defined over graphs~\cite{taubin1996optimal,shuman2011distributed,narang2013signal,onuki2016graph,segarra2017optimal,isufi2017autoregressive}. Similarly to time-domain filtering, graph filters manipulate the signal by selectively amplifying/attenuating its graph Fourier coefficients. Graph filters have seen use in applications including signal analysis~\cite{shuman2016vertex, sandryhaila2014discrete}, classification~\cite{belkin2006manifold,ma2016diffusion}, reconstruction~\cite{narang2013signal,girault2014semi,IsufiDistWien18}, denoising~\cite{Zhang2008,onuki2016graph, yaugan2016spectral, isufi2017distributedSparse} and clustering~\cite{tremblay2016compressive}. Furthermore, they are the central block in graph filterbanks~\cite{tay2015design,teke2017extending}, wavelets~\cite{hammond2011wavelets}, and convolutional neural networks~\cite{defferrard2016convolutional,gama2018convolutional}.

Distributed graph filter implementations emerged as a way to deal with the ubiquity of big data applications and to improve the scalability of computation. By allowing nodes to exchange only local information,  finite impulse response (FIR)~\cite{shuman2011distributed, sandryhaila2013discrete, segarra2017optimal} and infinite impulse response (IIR)~\cite{shi2015infinite, isufi2017autoregressive} architectures have been devised to implement a variety of responses.

However, being inspired by time domain filters, the above implementations do not fully exploit the structure in the data. The successive signal aggregations are locally weighted with similar weights often leading to high orders in approximating the desired response. To overcome this challenge, this paper proposes a generalization of the distributed graph filtering concept by applying edge-based weights to the information coming from different neighbors. While the detailed contributions are provided in Section~\ref{subsec:ov_contr}, let us here highlight that the above twist yields in graph filters that are flexible enough to capture complex responses with much lower complexity.

%As a result, in this work, our main focus is on developing a general family of graph filters that: $(i)$ are flexible enough to capture complex responses with reduced complexity, and $(ii)$ maintain the distributable nature of classical graph filters.

\subsection{Related Works}

Driven by the practical need to implement a linear function with few local operations, the works in~\cite{segarra2017optimal,isufi2017FiltRnd} have put efforts to ease the communication and computational costs of graph filters (GF).

In~\cite{segarra2017optimal}, the authors modified the polynomial graph filters (i.e., the FIR structure) to graph filters with node-dependent weights. This architecture, referred to as a node-variant (NV) FIR graph filter, assigns different weights to different nodes and yields the same distributable implementation as the classical FIR graph filter~\cite{shuman2011distributed,sandryhaila2013discrete}. The NV FIR filter addresses a broader family of linear operators (e.g., analog network coding) that goes beyond the class of shift-invariant graph filters. However, the NV FIR filter uses the same weight for all signals arriving at a particular node, ignoring the affinity between neighbors. As we show next, this limits the ability of the NV FIR filter in approximating the desired linear operator with very low orders.

%In practice, for both centralized and distributed implementations, high-order graph filters are often required to properly approximate the desired filter response for particular tasks. Hence, this situation leads to high computational and implementation costs. In the literature, several works have aimed to reduce the implementation complexity by modifying classical graph filters. 
%They refer to this kind of graph filters as node-variant (NV) graph filters. However, despite the fact that NV graph filters apply different weights to distinct nodes, they use the same weight for all neighboring signals of a particular node~\cite{segarra2017optimal}. Thus, they fail to fully leverage the network structure.  

The work in~\cite{isufi2017FiltRnd} introduced stochastic sparsification to reduce the cost of a distributable implementation. Here, the authors considered random edge sampling in each aggregation step to implement the filter output with a lower complexity. Although conceptually similar to this work, the filter following~\cite{isufi2017FiltRnd} is stochastic and, therefore, the results hold only in expectation. Moreover, since this approach applies only to shift invariant filters, such as the FIR filter~\cite{shuman2011distributed, sandryhaila2013discrete} and the IIR~\cite{isufi2017autoregressive} implementations, it cannot address linear operators that are not shift invariant.

Another related problem, which can be interpreted like graph filtering, is the multilayer sparse approximation of matrices~\cite{Magoarou2016}. Different from the previous two approaches, here a dense linear transform (matrix) is approximated through a sequence of sparse matrix multiplications to obtain a computational speedup. While this framework can be considered as sequential diffusions over a network, the support of such sparse matrices differs in each iteration. This in practice can be a limitation since it often requires information from non-adjacent nodes within an iteration. Finally, in~\cite{barbarossa2009distributed} the problem of optimal subspace projection by local interactions was studied. This paper proposed an algorithm to design the weights of a network to achieve the fastest convergence rate for this kind of linear operators. Although their method provides optimal weights for projecting the data to a particular subspace, it does not address the GSP setup of interest: implementation of graph filters or general linear operators.

\subsection{Paper Contributions}\label{subsec:ov_contr}

The main contribution of this work is the extension of the state-of-the-art graph filters to edge-variant (EV) graph filters. Due to the increased degrees of freedom (DoF), these filters allow for a complexity reduction of the distributed implementation while maintaining the approximation accuracy of current approaches. The salient points that broaden the existing literature are listed below.

%Considering previous work in the field, in this paper, we focus on extending state-of-the-art graph filters to edge-variant graph filters in order to obtain a complexity reduction of the distributed filter implementation and a better approximation of arbitrary filter responses. The main contributions of this paper that broaden the existing literature are listed below.
\begin{itemize}
	\item[--] We present edge-variant architectures to implement FIR and IIR graph filtering. This framework extends the state-of-the-art graph filters by allowing nodes to weigh differently the information coming from different neighbors. In this way, only local exchanges are needed for each shift, thus yielding an efficient distributable implementation. Three forms are analyzed: First, the general class of linear edge-variant FIR filters is presented and its distributable implementation is discussed. Then, following the per-tone equalization idea~\cite{van2001per}, the constrained edge-variant FIR graph filter is introduced. This filter maintains a similar distributable implementation as the general form, yet allowing a simple least-squares design. Finally, the family of edge-variant autoregressive moving average graph filters of order one (ARMA$_1$) is treated. This new IIR distributable architecture allows a better trade-off between approximation accuracy and convergence rate than current approaches.
	\item[--] Through the definition of the filter modal response, we give a Fourier interpretation to a particular family of edge-variant graph filters. This subfamily shows a shift-invariant nature and links the filtering operation with the scaling applied on the graph modes (e.g., the graph shift eigenvectors).
	\item[--] Besides outperforming state-of-the-art graph filters in GSP tasks such as approximating a user-provided frequency response, distributed consensus, and Tikhonov denoising, we present two new applications that could be addressed distributively with the proposed edge-variant graph filter. The latter includes a distributed solution of an inverse problem and distributed beamforming.
\end{itemize}

\subsection{Outline and Notation}
This paper is organized as follows: Section~\ref{sec:prem} reviews the preliminaries of GSP, distributed graph filtering, and further defines the modal response of a graph filter. Section~\ref{sec:fir} generalizes the FIR graph filters to the edge-variant version. Here, we introduce the shift-invariant edge-variant graph filter and characterize its graph modal response. Section~\ref{sec.CFIRev} analyzes a particular subfamily of edge-variant FIR graph filters, which enjoys a similar distributed implementation and a least-squares design strategy. In Section~\ref{sec:iir}, we generalize the idea of edge-variant filtering to the class of IIR graph filters. Section~\ref{sec:num} corroborates our findings with numerical results and Section~\ref{sec:con} concludes this paper.

In this paper, we adopt the following notation. Scalars, vectors, matrices, and sets are denoted by lowercase letters $(x)$, lowercase boldface letters $(\bm x)$, uppercase boldface letters $(\bm X)$, and calligraphic letters $(\mathcal{X})$, respectively. $[\bm X]_{ij}$ denotes the $(i,j)$th entry of the matrix $\bm X$ whereas $[\bm x]_i$ represents the $i$th entry of the vector $\bm x$. $\bm X^\transp$, $\bm X^\herm$, and $\bm X^{-1}$ are respectively the transpose, the Hermitian, and inverse of $\bm X$. The Moore-Penrose pseudoinverse of $\bm X$ is $\bm X^{\dagger}$. The Khatri-Rao product between $\bm X$ and $\bm Y$ is written as $\bm X \ast \bm Y$, while their Hadamard product as $\bm X \odot \bm Y$. $\bm 1$ and $\bm I$ are the all-one vector and identity matrix of appropriate size, respectively. ${\rm vec}(\cdot)$ is the vectorization operation. ${\rm diag}(\cdot)$ refers to a diagoal matrix with its argument on the main diagonal. ${\rm null}\{\cdot\}$ and ${\rm span}\{\cdot\}$ denote the nullspace and span of their argument. ${\rm nnz}(\bm X)$ and ${\rm supp}\{\bm X\}$ are the number of nonzero entries and the support of $\bm X$. Finally, we define the set $[K] = \{1,2,\ldots,K\}$.
%===============================================================================
\section{Preliminaries}
\label{sec:prem}

This section recalls the preliminary material that will be useful in the rest of the paper. It starts with the definition of the graph Fourier transform (GFT) and graph filtering. Then, two distributed recursions that implement FIR and IIR filtering operations on graphs are presented. Finally, the modal response of a graph filter is defined.

\textbf{Graph Fourier transform.} Consider an $N$-dimensional signal ${\bm x}$ residing on the vertices of a graph $\mathcal{G} = (\mathcal{V},\mathcal{E})$ with $\mathcal{V} = \{v_1, \ldots, v_N\}$ the set of $N$ vertices and $\mathcal{E} \subseteq \mathcal{V}\times\mathcal{V}$ the set of $M$ edges. Let $\W$ be the weighted graph adjacency matrix with $\W_{i,j} \neq 0$ if $(v_j,v_i) \in \mathcal{E}$ and $\W_{i,j} = 0$, otherwise. For an undirected graph, the graph Laplacian matrix is $\L$. Both $\W$ and $\L$ are valid candidates for the so-called graph shift operator $\S$, an $N\times N$ matrix that carries the notion of delay in the graph setting \cite{taubiny2000geometric, taubin1996optimal,sandryhaila2013discrete,shuman2013emerging}. Given the decomposition $\S = \U\bLambda\U^{-1}$ (assuming it exists), the GFT of $\x$ is defined as the projection of $\x$ onto the modes of $\S$, i.e., $\hx = \U^{-1}\x$. Likewise, the inverse GFT is $\x = \U\hx$. Following the GSP convention, the eigenvalues $\bLambda = {\rm diag}(\lambda_1, \ldots, \lambda_N)$ are referred to as the {graph frequencies}.

\textbf{Graph filtering.} A linear shift-invariant graph filter is an operation on the graph signal with graph frequency domain output
\begin{equation}
\label{eq.gfilt}
\hy = h(\bLambda)\hx.
\end{equation}
Here, $h({\boldsymbol \Lambda})$ is a diagonal matrix with the filter frequency response on its diagonal. More formally, the frequency response of a graph filter is a function 
\begin{equation}\label{eq.freq_resp}
h : \mathbb{C} \mapsto \mathbb{R},\;\;\; \lambda_i \rightarrow h(\lambda_i)
\end{equation}
that assigns a particular value $h(\lambda_i)$ to each graph frequency $\lambda_i$. This definition is akin to the one used in traditional signal processing, however depending on the underlying topology some shift operators might not be simple, i.e., the multiplicity of some eigenvalues is greater than one. So, there is no one-to-one mapping between the graph frequencies $\lambda_i$ and the graph modes $\bm{u}_i$. For this reason, at the end of this section, we will introduce the notion of graph modal response which treats the graph filters from the graph shift eigenvector perspective. Finally, by applying the inverse GFT on both sides of \eqref{eq.gfilt}, we have the vertex domain filter output
\begin{equation}
\label{eq.gfilt_vx}
\y = \H\x,
\end{equation}
with $\H = \U h(\bLambda)\U^{-1}$. 

\textbf{FIR graph filters.} A popular form of $\H$ is its expression as a polynomial of the graph shift operator \cite{taubin1996optimal,shuman2011distributed,sandryhaila2013discrete}, i.e.,
\begin{equation}
\label{eq.FIR}
\H_{\text{c}} \triangleq \sum_{k = 0}^K\phi_k\S^k,
\end{equation}
which we refer to as the \emph{classical} FIR graph filter. It is possible to run the FIR filter \eqref{eq.FIR} distributively due to the locality of $\S$~\cite{shuman2011distributed,segarra2017optimal}. In particular, since $\S^k\x = \S(\S^{k-1}\x)$ the nodes can compute locally the $k$th shift of $\x$ from the former $(k-1)$th shift. Overall, an FIR filter of order $K$ requires $K$ local exchanges between neighbors and amounts to a computational and communication complexity of $\Ocal(MK)$.

To expand the possible set of operations that can be implemented distributively through FIR recursions, \cite{segarra2017optimal} proposed the NV FIR graph filter. These filters have the node domain form
\begin{equation}
\label{eq.NV_FIR}
\H_{\text{nv}} \triangleq \sum_{k = 0}^K{\rm diag}(\bphi_k)\S^k,
\end{equation}
where the vector $\bphi_k = [\phi_{k,1}, \ldots, \phi_{k,N}]^T$ contains the node dependent coefficients applied at the $k$th shift. For $\bphi_k = \phi_k\mathbf{1}$, the NV FIR filter \eqref{eq.NV_FIR} reduces to the classical FIR filter \eqref{eq.FIR}. The NV FIR filter preserves also the efficient implementation of \eqref{eq.FIR} since it applies the node coefficients to the $k$th shifted input $\S^k\x = \S(\S^{k-1}\x)$ with a computational complexity of $\Ocal(MK)$. 

If a linear operator $\tilde{\H}$ needs to be approximated by a matrix polynomial as in~\eqref{eq.FIR}, the filter order $K$ can become large if a high accuracy is required. As the computational complexity scales with $K$, large-order graph filters incur high costs. The NV graph filters provide a first approach to tackle this issue. Starting from Section~\ref{sec:fir}, we generalize these ideas towards an \emph{edge-variant} (EV) graph filter alternative, which due to its enhanced DoF can approximate $\tilde{\H}$ with even a lower order $K$. Therefore, it leads to a more efficient implementation. One of the main benefits of both the NV and the EV graph filters is that they address a broader class of operators $\tilde{\H}$ which not necessarily share the eigenvectors with $\S$, such as the analog network coding \cite{segarra2017optimal}.

\textbf{IIR graph filters.} In \cite{isufi2017autoregressive}, the authors introduced an ARMA recursion on graphs to implement distributively IIR graph filtering, i.e., a filtering operation characterized by a rational frequency response. The building block of this filter is the so-called ARMA graph filter of order one (ARMA$_1$). This filter is obtained as the steady-state of the first-order recursion
\begin{equation}\label{eq:ARMA}
  \y_{t} = \psi\S\y_{t-1} + \varphi\x,
\end{equation}
with arbitrary $\y_0$ and scalar coefficients $\psi$ and $\varphi$. The operation~\eqref{eq:ARMA} is a distributed recursion on graphs, where neighbors now exchange their former output $\y_{t-1}$ rather than the input $\x$. The per-iteration complexity of such a recursion is $\Ocal(M)$. Given $\psi$ satisfies the convergence conditions for \eqref{eq:ARMA} \cite{isufi2017autoregressive}, the steady-state output of teh ARMA$_1$ is
\begin{eqnarray}\label{eq:ARMAss}
\y &\triangleq& \underset{t\rightarrow\infty}{\lim}\y_t = \varphi\sum\limits_{\tau=0}^{\infty}(\psi\S)^{\tau}\bm x = \varphi(\I - \psi\S)^{-1}\x \\
&\triangleq& \H_{\rm arma_{1}}\x\nonumber.
\end{eqnarray}
Such a filter 
%has the steady-state frequency response 
%
%\begin{equation}\label{eq:ARMAClas}
%  h(\lambda) = \frac{\varphi}{1 - \psi\lambda},
%\end{equation}
%
%and 
addresses several GSP tasks including Tikhonov denoising, graph signal interpolation under smoothness prior \cite{isufi2017autoregressive}, and aggregate graph signal diffusion \cite{isufi2017distributed}. In Section \ref{sec:iir}, we extend \eqref{eq:ARMA} to an edge-variant implementation with the aim to improve its convergence speed without heavily affecting the approximation accuracy.
 
\textbf{Graph modal response.} Before moving to the main contributions of this work, we define next the \emph{modal response} of a graph filter. The latter represents the scaling that the graph modes experience when a graph signal undergoes a linear shift-invariant graph filtering operation.

\begin{mydef}{\textnormal{(Graph modal response)}}
\label{def:modal}
The modal response of a linear shift-invariant graph filter 
\begin{equation}
	\bm H = \bm U {\rm diag}(h_1,\ldots,h_N) \bm U^{-1},
\end{equation}
is defined as the function
$$h : [N] \rightarrow \mathbb{C},\;\;\; i \mapsto h_i,$$
where $h_i$ is the scaling experienced by the $i$th graph mode.
\end{mydef}

This definition is equivalent to the graph frequency response \eqref{eq.freq_resp} when the shift operator has a simple spectrum. Since this is not always the case, we feel that the graph modal response is closer in meaning to the classical frequency response, and use it in the rest of the paper.
%===============================================================================
\section{Edge-Variant FIR Graph Filters}
\label{sec:fir}

Let us assume a scenario in which each node \emph{trusts} differently the information coming from different neighbors, e.g., a person is likely to weigh more the opinion of his/her partner than that of a colleague on a personal recommendation. So, it is reasonable to treat this case as a graph filter, where each node weighs differently the information of its neighbors.

Here, we formalize the above intuition in terms of EV FIR graph filters. We first introduce the general form of these filters while in Section~\ref{sec.sievFIR} we focus on the class of shift-invariant edge-variant (SIEV) FIR graph filter. The filter design strategy is discussed in Section~\ref{sec.FIR-EV_design}.

\subsection{General Form}
Consider an extension of the above edge-dependent fusion to several diffusion steps (signal shifts) where in each shift a different set of weights is used. At the $k$th diffusion, node $v_i$ weighs its neighbouring node $v_l$ with the weight $\phi_{i,l}^{(k)}$. Hence, in each shift $k\in[K]$, and for each node $v_i$, there is a set of coefficients $\{\phi_{i,l}^{(k)}\}$ for $l \in \xs{N}_{v_i}$. Here, $\xs{N}_{v_i}$ denotes the set of nodes adjacent to $v_i$, and $K$ is the number of shifts. Mathematically, the above behavior can be written through an order-$K$ \emph{general} EV FIR graph filter defined as
\begin{eqnarray}
\begin{split}
\bm{H}_{\rm{ev}} &\triangleq \bm\Phi_1 + \bm\Phi_2\bm\Phi_1 + \ldots + \bm\Phi_K\bm\Phi_{K-1}\cdots\bm\Phi_1 \\
 &=  \sum\limits_{k=1}^{K}\bm\Phi_{k:1},
\end{split}
\label{eq:evfilt}
\end{eqnarray}
where $\bm\Phi_{k:1} = \bm\Phi_k\bm\Phi_{k-1}\cdots\bm\Phi_1$ and $\bm\Phi_{j}\in\mathbb{C}^{N\times N}$ is an edge-weighting matrix constructed from the coefficient set $\{\{\phi_{1,l}^{(j)}\},\cdots,\{\phi_{N,l}^{(j)}\}\}$, more specifically $[\bm\Phi_j]_{il}=\phi_{i,l}^{(j)}$. By construction, the support of $\bm\Phi_j$ and $\bm S+\bm I$ is the same $\forall\;j\in[K]$. Since $\bm S$ might have zero entries on its diagonal, i.e., $\bm S = \bm W$, here we extend the support of $\{\bm\Phi_{j}\}_{j\in[K]}$ to allow each node to use also its own information. Note that definition \eqref{eq:evfilt} does not impose any symmetry on the coefficient matrices $\bm \Phi_{j}$. Depending on how adjacent nodes trust each other, the applied weights can be different.

The filter can differently be interpreted through time-varying shift operators~\cite{kalofolias2017learning,kolar2010estimating}, where $\bm\Phi_j$ is the weighted, possibly directed shift operator for the $j$th diffusion step with the support of $\bm S+\bm I$. Therefore, the general EV FIR filter accounts for signals that are generated through time-varying systems in directed subgraphs of the original graph. Here, the filter coefficient matrix only allows for \emph{edge deletion} or a re-weighting of graph flows.

Note that recursion~\eqref{eq:evfilt} is a distributed graph filter. To compute the output $\bm y = \Hev \bm x$, each node is only required to track the following quantities:
\begin{itemize}
	\item the shifted signal output $\bm x^{(k)} = \bm\Phi_k\bm x^{(k-1)}, \bm x^{(0)} = \bm x$,
	\item the accumulator output $\bm y^{(k)} = \bm y^{(k-1)} + \bm x^{(k)}, \bm y^{(0)} = \bm 0$.
\end{itemize}
Both these operations can be computed locally in each node by combining only neighboring data. Hence,~\eqref{eq:evfilt} preserves the efficient distributed implementation of the classical FIR graph filter~\eqref{eq.FIR} with a complexity of $\Ocal(MK)$.

Before addressing the design strategy of the filter~\eqref{eq:evfilt}, in the sequel, we introduce a particular structure of EV FIR graph filters that enjoy a graph Fourier domain interpretation.

\subsection{Shift-Invariant Edge-Variant Graph Eigenfilters}\label{sec.sievFIR}

An important family of graph filters is that of \emph{shift-invariant graph filters}, i.e., filters that commute with the graph shift operator $\Sb$. That is, given the shift $\Sb$ and the filter matrix $\Hb$, the following holds
\begin{equation}\label{eq:shiftInv}
	\Sb\Hb = \Hb\Sb.
\end{equation}
For a non-defective shift operator $\Sb$ and filter $\Hb$, i.e., the matrices accept an eigenvalue decomposition, condition \eqref{eq:shiftInv} is equivalent to saying that the matrices $\Sb$ and $\Hb$ are jointly diagonalizable, or that their eigenbases coincide.

There is no reason to believe that the graph filters of form~\eqref{eq:evfilt} are shift invariant. However, it is possible to characterize a subset of edge-variant graph filters that satisfy this property. To do that, we rely on the following assumptions:  
\vskip.25cm
{\renewcommand\labelitemi{}
\begin{itemize}
	\item{$(A.0)$} $\bm S$ is diagonalizable;
	\item{$(A.1)$} Each $\bm\Phi_j,\, j\in[K]$ is diagonalizable with the eigenbasis of $\bm S$;
	\item{$(A.2)$} Each $\bm\Phi_j,\, j\in[K]$ shares the support with $\bm S + \bm I$.
\end{itemize}
}

Given the above assumptions hold, we can rewrite~\eqref{eq:evfilt} as
%By considering matrices $\{\bm\Phi_j\}_{j\in\{0\}\cup[K]}$ meeting $(A.1)$, it is clear that using the construction~\eqref{eq:evfilt} leads to a shift-invariant graph filter, i.e.,
\begin{eqnarray}
\begin{split}
	\bm H_{\rm ev} =\sum\limits_{k=1}^{K}\bm\Phi_{k:1} = \bm U\bigg[\sum\limits_{k=1}^{K}\prod\limits_{j=1}^{k}\bm\Lambda_j\bigg]\bm U^{-1},
	\end{split}
\end{eqnarray}
where we substituted $\bm\Phi_j = \bm U \bm\Lambda_j\bm U^{-1}$. To provide a closed-form expression for the effect of such graph filters on the graph modes, let us first describe the set of \emph{fixed-support matrices that are diagonalizable with a particular eigenbasis} (i.e., matrices that meet $(A.1)$ and $(A.2)$).
%
%Therefore, in order to characterize the subset of graph filters of the form~\eqref{eq:evfilt} meeting the previous assumptions, we first describe the set of \emph{fixed-support matrices that are diagonalizable using a particular eigenbasis} which are the set of matrices meeting $(A.1)$ and $(A.2)$. Then, we provide a closed form expression for the effect of such graph filters in the graph modes. 
%
Mathematically, this set is defined as
%Mathematically, the set of matrices meeting $(A.1)$ and $(A.2)$ can be formally defined as the set
%
\begin{equation}\label{eq:fMtx}
  \J_{\Ub}^{\A} = \{ \bm A : \bm A = \Ub\bm\Omega\Ub^{-1}, [{\rm vec}(\bm A)]_i = 0,\;\forall\;i\in\A\},
\end{equation}
where $\A$ is the index set defining the zero entries of $\bm S + \bm I$ and $\bm\Omega$ is diagonal. The fixed-support condition in $\Js$ can be expressed in the linear system form
\begin{equation}\label{eq:linS}
  \Phib_{\A}{\rm vec}(\bm A) = {\boldsymbol 0},
\end{equation}
with $\Phib_{\A}\in\{0,1\}^{\card{A}\times N^2}$ denoting the selection matrix whose rows are the rows of an $N^2\times N^2$ identity matrix indexed by the set $\A$. By leveraging the vectorization operation properties and the knowledge of the eigenbasis of $\bm A$, we can rewrite~\eqref{eq:linS} as
\begin{equation}\label{eq:linK}
  \Phib_{\A}{\rm vec}(\bm A) = \Phia(\Ub^{-\transp}\ast\Ub)\bm\omega = {\boldsymbol 0},
\end{equation}
where ``$\ast$" represents the Kathri-Rao product and $\bm\omega = [[\bm\Omega]_{11},[\bm\Omega]_{22},\ldots,[\bm\Omega]_{NN}]^\transp$ is the vector containing the eigenvalues of $\bm A$. From~\eqref{eq:linK}, we see that $\bm\omega$ characterizes the intersection of the nullspace of $\Phia$ and the range of $\Ub^{-\transp}\ast\Ub$. More formally, we write
%
%Therefore, we can formally express this property of the eigenvalue vector as
\begin{equation}\label{eq:Tb}
  \bm\omega \in \snull{\Tb_{\Ub}^{\A}},
\end{equation}
with $\Tb_{\Ub}^{\A} = \Phib_\A(\Ub^{-\transp}\ast\Ub)$.

With this in place, the following proposition characterizes the matrices that belong to the set $\Js$.
%
%Using the above property of the eigenvalue vector, $\lambdab$, the matrices that belong to the set $\Js$ can be characterized by the following theorem.
\begin{myprop}{\textnormal{(Graph shift nullspace property)}}\label{prop:nullSpace}
  Given an orthonormal basis $\Ub$ and a sparsity pattern defined by the set $\A$, the matrices within the set $\J_{\Ub}^{\A}$ are of the form $\bm A = \Ub\bm\Omega\Ub^{-1}$ and have eigenvalues given by
  \begin{equation}\label{eq:eigS}
    \bm\Omega = {\rm diag}(\boldsymbol B_{\Ub}^{\A}\alphab),
  \end{equation}
  where the matrix $\Bb_{\Ub}^{\A}$ is a basis for the nullspace of $\Tb_{\Ub}^{\A}$, i.e.,
  $$
  \sran{\Bb_{\Ub}^{\A}} = {\rm null}\{\Tb_{\Ub}^{\A}\},
  $$
  and $\alphab$ is the basis expansion coefficient vector.
  \label{eq:prop1}
\end{myprop}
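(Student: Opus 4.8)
The plan is to unwind the chain of equivalences that the excerpt has already set up and then simply read off the claimed characterization. The key observation is that all the hard work is really in the displayed equations \eqref{eq:fMtx}--\eqref{eq:Tb}: a matrix $\bm A$ lies in $\Js$ precisely when (i) it is diagonalizable in the basis $\Ub$, i.e.\ $\bm A = \Ub\bm\Omega\Ub^{-1}$ for some diagonal $\bm\Omega$, and (ii) the prescribed entries vanish, i.e.\ $\Phib_\A{\rm vec}(\bm A)=\bm 0$. So first I would restate these two conditions and note that, for fixed $\Ub$, the only free parameter is the eigenvalue vector $\bm\omega$ collecting the diagonal of $\bm\Omega$.

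Next I would make precise the identity used in \eqref{eq:linK}. Using the standard vectorization rule ${\rm vec}(\Ub\bm\Omega\Ub^{-1}) = (\Ub^{-\transp}\otimes\Ub){\rm vec}(\bm\Omega)$, and the fact that $\bm\Omega$ is diagonal so that ${\rm vec}(\bm\Omega) = (\I\ast\I)\bm\omega$ picks out only the ``diagonal'' columns of the Kronecker product, one gets $(\Ub^{-\transp}\otimes\Ub){\rm vec}(\bm\Omega) = (\Ub^{-\transp}\ast\Ub)\bm\omega$, where $\ast$ is the Khatri--Rao (column-wise Kronecker) product. Hence ${\rm vec}(\bm A) = (\Ub^{-\transp}\ast\Ub)\bm\omega$, and the fixed-support constraint becomes $\Phib_\A(\Ub^{-\transp}\ast\Ub)\bm\omega = \Tb_{\Ub}^{\A}\bm\omega = \bm 0$. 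This is exactly \eqref{eq:linK}--\eqref{eq:Tb}: the admissible eigenvalue vectors are precisely the elements of ${\rm null}\{\Tb_{\Ub}^{\A}\}$.

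The last step is just a change of parametrization. If $\Bb_{\Ub}^{\A}$ is any matrix whose columns form a basis of ${\rm null}\{\Tb_{\Ub}^{\A}\}$, then $\bm\omega\in{\rm null}\{\Tb_{\Ub}^{\A}\}$ if and only if $\bm\omega = \Bb_{\Ub}^{\A}\alphab$ for some coefficient vector $\alphab$. Substituting back, every $\bm A\in\Js$ has the form $\bm A = \Ub\,{\rm diag}(\Bb_{\Ub}^{\A}\alphab)\,\Ub^{-1}$, i.e.\ $\bm\Omega = {\rm diag}(\Bb_{\Ub}^{\A}\alphab)$, which is \eqref{eq:eigS}; and conversely any such $\bm A$ is diagonalized by $\Ub$ and satisfies the support constraint by construction, so it lies in $\Js$. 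For the orthonormal-$\Ub$ statement, I would simply note that $\Ub^{-\transp} = \Ub$ when $\Ub$ is real orthogonal (or $\Ub^{-\herm}=\Ub$ up to conjugation in the complex unitary case), which merely simplifies $\Tb_{\Ub}^{\A}$ to $\Phib_\A(\Ub\ast\Ub)$ and does not change the logic.

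I do not expect a serious obstacle here; the proposition is essentially a repackaging of a linear-algebra fact, and the only mild care needed is in justifying the Khatri--Rao identity ${\rm vec}(\Ub\bm\Omega\Ub^{-1}) = (\Ub^{-\transp}\ast\Ub)\bm\omega$ for diagonal $\bm\Omega$ and in being explicit that ``basis for the nullspace'' gives a both-directions (surjective-onto-the-nullspace and injective-into-$\Js$) correspondence between $\alphab$ and the admissible eigenvalue vectors. One subtlety worth a sentence is that the map $\alphab\mapsto\bm A$ is a bijection between $\R^{\dim{\rm null}\{\Tb_{\Ub}^{\A}\}}$ and $\Js$ only if $\Ub^{-\transp}\ast\Ub$ has full column rank (so distinct $\bm\omega$ give distinct $\bm A$); if one only wants ``every element of $\Js$ is of this form,'' that rank condition is not needed, and I would state the proposition at that level to keep the argument clean.
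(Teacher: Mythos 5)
Your proposal is correct and follows essentially the same route as the paper: the paper's proof is a one-line appeal to the chain \eqref{eq:linK}--\eqref{eq:Tb}, and you simply spell out the vectorization/Khatri--Rao identity and the reparametrization of ${\rm null}\{\Tb_{\Ub}^{\A}\}$ via $\Bb_{\Ub}^{\A}\alphab$ that this chain encapsulates. Your added remarks on injectivity of the $\alphab\mapsto\bm A$ map are a harmless refinement, not a departure.
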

\begin{proof}
  The proof follows from~\eqref{eq:linK}-\eqref{eq:Tb}.
\end{proof}

The above result is not entirely surprising and has been used for assessing the uniqueness of the graph shift operator in topology identification~\cite{coutino2018sparsest}. Here, we leverage Proposition~\ref{eq:prop1} for interpreting the response of the SIEV graph filters. More specifically, under $(A.1)$ and $(A.2)$ we can express each matrix $\bm \Phi_j$ of~\eqref{eq:evfilt} as
\begin{equation}
\bm\Phi_j = \Ub\diagg{\Bs\alphab_j}\Ub^{-1},
\end{equation}
and write any SIEV FIR filter as
%\footnote{Note that a SIEV FIR graph filter has a posynomial-like structure, i..e, it is a function of the form $f(x_1,\ldots,x_n) = \sum\limits_{k=1}^{K}c_kx_1^{a_{1k}}\cdots x_n^{a_{nk}}$.}
\begin{equation}
  \Hb_{\rm siev} = \Ub\bigg[ \sum\limits_{k=1}^{K}\prod_{j=1}^{k}\diagg{\Bs\alphab_j}\bigg]\Ub^{-1}.
  \label{eq:evFreq}
\end{equation}
The following proposition formally characterizes the frequency interpretation of such filters in terms of the modal response.

\begin{myprop}{\textnormal{(Modal Response of SIEV FIR)}} \label{prop_sievMR}
An FIR graph filter of the form~\eqref{eq:evfilt} satisfying $(A.1)$ and $(A.2)$ has $i$th modal response
\begin{equation}\label{eq:evEV_h}
  h_i = \sum\limits_{k=1}^{K}\prod\limits_{j=1}^{k}(\bb_{\Ub,i}^{\A})^{\transp}\alphab_{j} + (\bb_{\Ub,i}^{\A})^\transp\alphab_0,
\end{equation}
where $(\bb_{\Ub,i}^{\A})^\transp$ is the $i$th row of $\Bs$.
\end{myprop}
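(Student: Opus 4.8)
The plan is to start from the expression~\eqref{eq:evFreq} already derived in the excerpt, namely
$$
\Hb_{\rm siev} = \Ub\bigg[\sum\limits_{k=1}^{K}\prod_{j=1}^{k}\diagg{\Bs\alphab_j}\bigg]\Ub^{-1},
$$
and simply read off the diagonal middle factor, which by Definition~\ref{def:modal} \emph{is} the modal response. The key observation is that a product of diagonal matrices is again diagonal, and its $i$th diagonal entry is the product of the $i$th diagonal entries of the factors. Since the $i$th diagonal entry of $\diagg{\Bs\alphab_j}$ is $[\Bs\alphab_j]_i = (\bb_{\Ub,i}^{\A})^{\transp}\alphab_j$ by the definition of $(\bb_{\Ub,i}^{\A})^{\transp}$ as the $i$th row of $\Bs$, the $i$th diagonal entry of $\prod_{j=1}^{k}\diagg{\Bs\alphab_j}$ equals $\prod_{j=1}^{k}(\bb_{\Ub,i}^{\A})^{\transp}\alphab_j$. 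Summing over $k\in[K]$ gives the first term of~\eqref{eq:evEV_h}.

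Second, I would account for the $k=0$ term. Going back to the general FIR filter~\eqref{eq:evfilt}, the sum formally starts at $k=1$, but in the broader FIR setting one also allows an order-zero (``direct feedthrough'') term $\bm\Phi_0$ with the same support constraints; under $(A.1)$ and $(A.2)$ this matrix is likewise of the form $\Ub\diagg{\Bs\alphab_0}\Ub^{-1}$ by Proposition~\ref{prop:nullSpace}. Adding it to~\eqref{eq:evFreq} contributes the isolated summand $(\bb_{\Ub,i}^{\A})^{\transp}\alphab_0$ to the $i$th diagonal entry, which is exactly the trailing term in~\eqref{eq:evEV_h}. I would state this explicitly so the appearance of $\alphab_0$ in the claim is justified.

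Concretely, the steps in order are: (i) invoke Proposition~\ref{prop:nullSpace} to write each $\bm\Phi_j = \Ub\diagg{\Bs\alphab_j}\Ub^{-1}$ under $(A.1)$--$(A.2)$, including $j=0$; (ii) substitute into~\eqref{eq:evfilt} and telescope the products, using $\Ub^{-1}\Ub = \I$, to land on~\eqref{eq:evFreq} augmented with the $\alphab_0$ term; (iii) note the bracketed matrix is diagonal, so by Definition~\ref{def:modal} its $i$th diagonal entry is $h_i$; (iv) evaluate that entry by using that the $i$th entry of a product of diagonal matrices is the product of $i$th entries, and that $[\diagg{\Bs\alphab_j}]_{ii} = (\bb_{\Ub,i}^{\A})^{\transp}\alphab_j$.

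The argument is essentially bookkeeping once Proposition~\ref{prop:nullSpace} is in hand, so there is no real obstacle; the only point requiring a word of care is the bookkeeping of index ranges — making sure the product $\prod_{j=1}^{k}$ is nonempty for each $k\ge 1$ and that the separate $\alphab_0$ term is introduced cleanly rather than being silently folded into the sum. A secondary subtlety worth a sentence is that Proposition~\ref{prop:nullSpace} was stated for an \emph{orthonormal} basis $\Ub$, whereas here $\Ub$ is merely the (possibly non-orthonormal) eigenbasis of $\Sb$; since the proof of Proposition~\ref{prop:nullSpace} only used the Khatri--Rao identity $\vv{\Ub\bm\Omega\Ub^{-1}} = (\Ub^{-\transp}\ast\Ub)\bm\omega$, which holds for any invertible $\Ub$, the characterization carries over verbatim, and I would remark on this to keep the derivation self-contained.
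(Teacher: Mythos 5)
Your proof is correct and takes essentially the same route as the paper, whose proof is simply to read the modal response off the diagonal middle factor in~\eqref{eq:evFreq}. Your added care with the $\alphab_0$ term (which does not appear in~\eqref{eq:evfilt} or~\eqref{eq:evFreq} as written) and with the orthonormal-versus-merely-invertible $\Ub$ in Proposition~\ref{prop:nullSpace} only makes explicit bookkeeping that the paper's one-line proof leaves implicit.
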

\begin{proof}
  The proof follows directly from~\eqref{eq:evFreq}.
\end{proof}

An interesting outcome from Proposition~\ref{prop_sievMR} is that {the filter response is independent of the graph frequencies}. This is clear from \eqref{eq:evEV_h}, where we see that the eigenvalue $\lambda_i$ does not appear in the expression of $h_i$. Therefore, we can interpret the SIEV FIR graph filters as \emph{eigenvector filters}, since they act on the eigenmodes of the graph. That is, for each graph eigenmode (eigenvector) $\ub_i$, $\Hb_{\rm siev}$ might apply a different gain given by~\eqref{eq:evEV_h} (independent of $\lambda_i$) to the component of the input signal $\xb$ in the direction of $\ub_i$.
%
%So, we can formally describe the response, with respect to the graph eigenmodes, of the restricted family of shift-invariant edge-variant graph filters (SIEV-GFs) in the following proposition. 
%We refer to this subset of graph filters as \emph{shift-invariant edge-variant graph eigenfilters} (SIEV-GEs) due to the fact that they directly act on the eigenmodes of the graphs, as their response does not depend on the eigenvalues of the shift operator $\bm S$. 
%
%
%From this proposition, an interesting property of the filters of the form~\eqref{eq:evFreq} can be observed: \emph{the filter response is independent of the graph frequencies}. This is clear from~\eqref{eq:evEV_h}, where it is seen that the eigenvalue $\lambda_i$ does not appear in the modal response $h_i$. Therefore, we could interpret this family of filters as \emph{eigenvector filters}. That is, for the $i$th eigenvector $\ub_i$, the filter $\Hb_{\rm siev}$ applies a fixed gain given by~\eqref{eq:evEV_h} (independent of $\lambda_i$) to the component of the input signal $\xb$ in the direction of $\ub_i$. 
%
%As a result, the filter, $\Hb_{\rm siev}$ can be treated as a filter that applies a different gain to each of the eigenmodes of the graph. 
This is in contrast to classical FIR graph filters which apply the same polynomial expression to all modes $\{\bm u_i\}_{i\in[N]}$.

The following section introduces methods for designing EV FIR graph filters in the node domain and SIEV FIR graph filters using the parametrization in~\eqref{eq:evEV_h}.
%\textcolor{red}{Elvin: "reduced-order parametrization" might not be clear to the reader what you mean. Consider rephrasing this.}

\subsection{Filter Design}\label{sec.FIR-EV_design}

{\textbf{General form.}}
{%\color{blue}
Given a desired operator $\tilde{\bm H}$, we design an EV FIR filter $\bm H_{\rm{ev}}$ [cf.~\eqref{eq:evfilt}] that approximates $\tilde{\bm H}$ as the solution of the optimization problem
\begin{equation}\label{eq:genmin}
\begin{aligned}
& \underset{\{\bm{\Phi}_{k}\}}{\text{minimize}}
& & \Vert\tilde{\bm H} - \sum\limits_{k=1}^{K}\bm\Phi_{k:1}\Vert \\
& \text{subject to}
& & \bm\Phi_{k:1} = \bm\Phi_k\bm\Phi_{k-1}\cdots\bm\Phi_1,\\
& & & {\rm supp}\{\bm \Phi_k\} = {\rm supp}\{\bm S + \bm I\} \;\forall\; k\in[K],
\end{aligned}
\end{equation}
where $\Vert\cdot\Vert$ is an appropriate distance measure, e.g., the Frobenius norm ($\Vert\cdot\Vert_{F}$), or the spectral norm ($\Vert\cdot\Vert_2$).% In this work, we mainly focus on the Frobenius norm to measure the approximation quality.

Unfortunately,~\eqref{eq:genmin} is a high-dimensional nonconvex problem and hard to optimize. An approach to finding a local solution for it is through block coordinate methods, which provide local convergence guarantees when applied to such problems~\cite{xu2013block}. In fact, the cost in~\eqref{eq:genmin} is a \emph{block multi-convex} function, i.e., the cost function is a convex function of $\Phib_i$ with all the other variables fixed.

Starting then with an initial set of matrices $\{\Phib_j^{(0)}\}_{j\in[K]}$ (potentially initialized with an order-$K$ classical FIR filter), we solve a sequence of optimization problems where at the $i$th step, the matrix $\Phib_i$ is found. That is, at the $i$th iteration, we fix the matrices $\{\Phib_j^{(0)}\}_{j\in[K]\backslash \{i\}}$ and solve the convex problem
\begin{equation}\label{eq:genMinStep}
\begin{aligned}
& \underset{\Phib_i}{\text{minimize}}
& &\Vert\tilde{\bm H} - \sum\limits_{k=1}^{K}\Phib_{k:(i+1)}^{(0)}\Phib_i\Phib_{(i-1):1}^{(0)}\Vert \\
& \text{subject to}
& &{\rm supp}\{\bm \Phi_i\} = {\rm supp}\{\bm S + \bm I\},
\end{aligned}
\end{equation}
where $\Phib_{a:b}^{(0)} = \Phib_{a}^{(0)}\Phib_{a-1}^{(0)}\ldots\Phib_{b+1}^{(0)}\Phib_{b}^{(0)}$ for $a \ge b$ and $\Phib_{a:b}^{(0)} = \I$, otherwise. Then, the matrix $\Phib_i^{(0)}$ is updated with its solution and the procedure is repeated for all $\{\Phib_j\}_{j\in[K]}$. If the final fitting error is large, the whole process can be repeated until the desired performance is reached, or until a local minimum is found.

%\textcolor{blue}{\textcolor{red}{The blue part is from Mario, for discussion.}Therefore, starting with an initial set of matrices $\{\Phib_j^{(0)}\}_{j\in\{0\}\cup[K]}$ (possibly initialized with the shift matrix $\bm S$), we solve a sequence of optimization where at the $i$th step the pair of matrices $\{\Phib_0,\Phib_i\}$ are found. That is, at the $i$th iteration, we fix the matrices $\{\Phib_j^{(0)}\}_{j\not\in\{0,i\}}$ and solve the following convex problem:
%\begin{equation}\label{eq:genMinStep}
%	\begin{array}{lc}
%		\underset{\Phib_0,\Phib_i}{\text{minimize}} & \Vert\tilde{\bm H} - \sum\limits_{k=1}^{K}(\prod\limits_{j > i}^{k}\Phib_j^{(i)})\Phib_i(\prod\limits_{j\neq 0,j<i}^{i-1}\bm \Phi_j^{(i)}) - \bm \Phi_0\Vert \\
%		\text{subject to} & {\rm supp}\{\bm \Phi_i\} = {\rm supp}\{\bm S + \bm I\}
%	\end{array}.
%\end{equation}
%Then, the matrices $\Phib_i^{(0)}$ and $\Phi_0^{(0)}$ are updated with the solution of~\eqref{eq:genMinStep}. This procedure is repeated for each of the matrices $\{\Phib_j\}_{j\in[K]}$ until all matrices have been updated. If extra rounds are required, due to the current fitting error, the whole process can be repeated until the desired performance is reached, or the method convergences to a local minima.}

Although filter \eqref{eq:evfilt} is the most general EV FIR filter form, the non-convexity encountered in the above design strategy may often lead to a local solution with an unacceptable performance. To tackle such issue, in Section~\ref{sec.CFIRev}, we introduce a constrained EV FIR filter which provides a higher flexibility than the state-of-the-art graph filters while accepting a simple least squares design.

}

{\textbf{SIEV form.}}
Besides enjoying the modal response interpretation, the SIEV FIR filter also has a simpler design than the general form~\eqref{eq:evfilt}. For $\{\tilde{h}_i\}_{i=1}^{N}$ being the desired graph modal response\footnote{This can be for instance a low-pass form if we want to keep only the eigenvector contribution associated with the low graph frequencies.}, the 
%
%Let $\{\tilde{h}_i\}_{i=1}^{N}$ denote the graph modal response that we want to implement distributively, e.g., a low-pass form if we want to keep only the eigenvector contribution associated with the low graph frequencies. The 
%
SIEV FIR filter design consists of
% fitting the modal response~\eqref{eq:evEV_h} to $\{\tilde{h}_i\}_{i=1}^{N}$ by 
solving the optimization problem
%
%When we restrict ourselves to the case of SIEV-GEs, the design has a simpler form than the one for general EV graph filters [cf.~\eqref{eq:genmin}]. However, the design problem does not have a suitable convex formulation that allows for an efficient solution. 
%
%More specifically, consider a desired graph modal response $\{\tilde{h}_i\}_{i=1}^{N}$, as well as the expression of the modal response of a SIEV-GE [cf.~\eqref{eq:evEV_h}]. In order to design the SIEV graph filter, we require to solve the following optimization problem:
\begin{equation}\label{eq:cstMinSIEV}
  \begin{array}{ll}
    \underset{\{\alphab_j\}}{\rm minimize} & \sum\limits_{i=1}^{N}\big\Vert \tilde{h}_{i} - \sum\limits_{k=1}^{K}\prod\limits_{j=1}^{k}(\bb_{\Ub,i}^{\A})^{T}\alphab_{j} \big\Vert_2^{2}.
  \end{array}
\end{equation}
Similarly to \eqref{eq:genmin}, problem~\eqref{eq:cstMinSIEV} is nonconvex and cannot in general be solved up to global optimality with standard convex optimization methods. However, \eqref{eq:cstMinSIEV} is also a block multi-convex function in each $\alphab_i$, $i\in[K]$ individually and, therefore, the block coordinate descent methods~\cite{xu2013block} can be employed to find a local minimum. Alternatively, the straightforward analytical expression of the gradient of the cost function allows the use of off-the-shelf solvers for global optimization, such as the MATLAB's built-in \texttt{fmincon} function~\cite{matlabFmincon}.

%to the general EV-GF design, the cost function in~\eqref{eq:cstMinSIEV} is not convex and can not be generally solved up to global optimality with standard convex optimization methods. However, noticing that~\eqref{eq:cstMinSIEV} is also a block multi-convex function, i.e., for each $\alphab_i$, $i=0,1,\ldots,K$, the cost function is a convex function of $\alphab_i$ while all the other variables, $\alphab_j$, $j\neq i$ are kept fixed, we can employ block coordinate descent methods for its optimization with guaranteed local convergence. Furthermore, if off-the-shelf solvers for global optimization are preferred due to the straightforward analytical expression for the gradient of the cost function, common functions such as the MATLAB's built-in \texttt{fmincon} function~\cite{matlabFmincon} can be employed.

% ----
%Considering these facts, in the next section we propose a constrained version of the EV graph filter that provides a simpler design for the weighting matrices and still preserves the efficient implementation with complexity $\Ocal(MK)$.

%A closely related kind of graph filters are the ones that naturally exhibit a \emph{posynomial structure}. In the following, a brief digression is made to show the relation of this graph filter with the SIEV graph filters.

%\subsection{Posynomial Graph Filters}

\section{Constrained Edge-Variant FIR Graph Filters}\label{sec.CFIRev}

To overcome the design issues of the general EV FIR filter, here we present a constrained version of it that retains both the distributed implementation and the edge-dependent weighting. This reduction of the DoF will, in fact, allow us to design the filter coefficients in a least squares fashion. The structure of these filters along with their distributed implementation is presented in the next section. In Section~\ref{subsec:sicev} we provide a modal response interpretation of these filters, while in Section~\ref{subsec:CEV_desi} we present the design strategy.

\subsection{General Form}\label{subsec:gfCEV}

The \emph{constrained} EV (CEV) FIR graph filter is defined as
\begin{equation}
\H_{\rm{cev}} = \Phib_1 + \Phib_2\S + \cdots + \Phib_K\S^{K-1} \triangleq \sum\limits_{k=1}^{K}\bm \Phi_k  \bm S^{k-1},
\label{eq:evD}
\end{equation}
where the edge-weighting matrices $\{\bm \Phi_k\}_{k\in[K]}$ again share the support with $\bm S + \bm I$.
These filters enjoy the same distributed implementation of the general form \eqref{eq:evfilt}. In fact, each node can compute locally the filter output by tracking the following quantities:
\begin{itemize}
	\item the regular shift output $\bm x^{(k)}=\bm S \bm x^{(k-1)},~\bm x^{(0)} = \bm x$,
	\item the weighted shift output  $\bm z^{(k)} = \bm \Phi_k\bm x^{(k-1)}$,
	\item the accumulator output $\bm y^{(k)} = \bm y^{(k-1)} + \bm z^{(k)},~\bm y^{(0)} = \bm 0$.
\end{itemize}
From the locality of $\bm S$ and $\Phib_k$, both $\bm x^{(k)}$ and $\bm z^{(k)}$ require only neighboring information. The final filter output is $\bm y = \bm y^{(K)}$ which yields the same computational complexity of $\mathcal{O}(MK)$.

Note that construction~\eqref{eq:evD} still applies different weights to the signal coming from different edges. However, instead of adopting a different \emph{diffusion} matrix at every step, the signal diffusion occurs through the graph shift $\bm S$. The additional extra step mixes locally $\bm x^{(k-1)}$ using edge-dependent weights, which are allowed to vary for each $k$. We here adopt the term constrained for this implementation from the observation that the diffusion is performed using only a single shift operator matrix. Fig.~1(a) visually illustrates the differences between the different graph filters analyzed so far.

\begin{remark}
The NV graph filter from \cite{segarra2017optimal} [cf.~\eqref{eq.NV_FIR}] is a particular case of the CEV graph filter. The local matrices $\{\bm\Phi_k\}_{k=1}^{K}$ are in fact substituted by diagonal matrices with distinct elements across their diagonals.
% every row of the matrix $\bm\Phi_k,\, k = 1,\ldots, K$ has equal non-zero elements, i.e.,
% \begin{equation}
% \begin{array}{ll}
% \H_{\rm{nv}} & =\sum\limits_{k=1}^K \big((\bm \phi_{k}\bm 1^T)\odot \S \big) \S^{k-1} + {\rm diag}(\bm \phi_{0}). \\
% \end{array}
% \end{equation}
\end{remark}

\subsection{Shift-Invariant Constrained Edge-Variant Graph Eigenfilters}\label{subsec:sicev}

Following the same lines of Section~\ref{sec.sievFIR}, we can use the set $\J_{\Ub}^{\A}$~\eqref{eq:fMtx} to characterize the graph modal response of the CEV FIR graph filter when the matrices $\{\bm\Phi_k\}_{k=1}^{K}$ satisfy $(A.1)$ and $(A.2)$.
%
%express the structure of the CEV graph filter when the condition $(A.1)$ and $(A.2)$ are satisfied by the matrices $\{\bm\Phi_k\}_{k=0}^{K}$ [cf.~\eqref{eq:evD}].
%\begin{equation}
%  (\Phib_k \odot \bm S) \bm S = \bm S (\Phib_k \odot \bm S),
%\end{equation}
%is met for each $k = 0,\ldots,K$. 
%
This subset of CEV FIR graph filters, which we refer to as {shift-invariant CEV (SICEV) FIR graph filters}, can again be expressed in terms of $\bm B_{\bm U}^{\mathcal{A}}$ and $\{\alphab_k\}_{k=0}^{K}$ as
\begin{equation}\label{eq:SICEV}
  \Hb_{\rm sicev} = \Ub\bigg[ \sum\limits_{k=1}^{K}\diagg{\Bs\alphab_k \odot \lambdab^{\odot (k-1)}} \bigg]\Ub^{-1},
\end{equation}
where $\lambdab^{\odot k}$ denotes the $k$th element-wise power of the eigenvalue vector of the shift operator $\bm S$. The subsequent proposition formalizes the modal response of these filters.

%
%To make the effect of this kind of filters on the graph frequency content more precise, in the following proposition, we provide a formal definition of the modal response of filters with the structure defined in~\eqref{eq:SICEV}.

\begin{myprop}{\textnormal{(Modal Response of SICEV FIR)}} An FIR graph filter of the form~\eqref{eq:evD} satisfying $(A.1)$ and $(A.2)$ has $i$th modal response
\begin{equation}\label{eq:evCEV_h}
  h_i = \sum\limits_{k=1}^{K}\gamma_{ik}\lambda_i^{(k-1)},
\end{equation}
where $\gamma_{ik} = (\bb_{\Ub,i}^{\A})^{T}\alphab_{k}$ is the $k$th polynomial coefficient for the $i$th graph frequency and $(\bb_{\Ub,i}^{\A})^T$ is the $i$th row of $\Bs$.
\end{myprop}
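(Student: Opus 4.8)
The plan is to mirror the derivation that produced the SIEV modal response in Proposition~\ref{prop_sievMR}, but now tracking the eigenvalue powers that appear because the diffusion in~\eqref{eq:evD} is carried by $\bm S$ rather than by the $\bm\Phi_k$ themselves. First I would invoke assumptions $(A.0)$--$(A.2)$: $\bm S = \bm U\bm\Lambda\bm U^{-1}$ with $\bm\Lambda = \diagg{\lambdab}$, and each $\bm\Phi_k$ is simultaneously diagonalizable as $\bm\Phi_k = \bm U\diagg{\bm B_{\bm U}^{\mathcal{A}}\alphab_k}\bm U^{-1}$ by Proposition~\ref{prop:nullSpace}. Substituting into the generic term $\bm\Phi_k\bm S^{k-1}$ gives $\bm U\diagg{\bm B_{\bm U}^{\mathcal{A}}\alphab_k}\bm\Lambda^{k-1}\bm U^{-1} = \bm U\diagg{\bm B_{\bm U}^{\mathcal{A}}\alphab_k\odot\lambdab^{\odot(k-1)}}\bm U^{-1}$, where I use that the product of two diagonal matrices is the diagonal matrix of the Hadamard product of their diagonals. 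Summing over $k\in[K]$ yields exactly~\eqref{eq:SICEV}.

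Next I would read off the modal response directly from~\eqref{eq:SICEV}. Since $\bm H_{\rm sicev} = \bm U\diagg{\bm h}\bm U^{-1}$ with $\bm h = \sum_{k=1}^{K}\bm B_{\bm U}^{\mathcal{A}}\alphab_k\odot\lambdab^{\odot(k-1)}$, the $i$th entry is $h_i = \sum_{k=1}^{K}\big[\bm B_{\bm U}^{\mathcal{A}}\alphab_k\big]_i\,[\lambdab^{\odot(k-1)}]_i = \sum_{k=1}^{K}(\bb_{\bm U,i}^{\mathcal{A}})^{\transp}\alphab_k\,\lambda_i^{k-1}$, using that $(\bb_{\bm U,i}^{\mathcal{A}})^{\transp}$ is the $i$th row of $\bm B_{\bm U}^{\mathcal{A}}$ and $[\lambdab^{\odot(k-1)}]_i = \lambda_i^{k-1}$. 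Setting $\gamma_{ik} = (\bb_{\bm U,i}^{\mathcal{A}})^{\transp}\alphab_k$ gives the claimed form~\eqref{eq:evCEV_h}, and by Definition~\ref{def:modal} this $h_i$ is precisely the scaling applied to the $i$th graph mode. A short remark would note that because the constraint ${\rm supp}\{\bm\Phi_k\} = {\rm supp}\{\bm S+\bm I\}$ is exactly the fixed-support condition defining $\Js$, the vectors $\alphab_k$ range over all of $\mathbb{R}^{\dim\snull{\Tb_{\bm U}^{\mathcal{A}}}}$, so the proposition characterizes the full SICEV family.

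The only real subtlety — and the step I would be most careful about — is the bookkeeping with the $\alphab_0$ term. The SIEV response~\eqref{eq:evEV_h} carries an extra $(\bb_{\bm U,i}^{\mathcal{A}})^{\transp}\alphab_0$ coming from an order-zero matrix $\bm\Phi_0$; in the CEV definition~\eqref{eq:evD} the $k=1$ term $\bm\Phi_1\bm S^0 = \bm\Phi_1$ already plays that role, so no separate $\alphab_0$ is needed and the sum genuinely starts at $k=1$ with $\lambda_i^{0}=1$. I would state this explicitly so the reader does not expect a constant term analogous to the SIEV case. Everything else is the routine algebra of diagonal matrices, so the proof is essentially a one-line consequence of~\eqref{eq:SICEV}, exactly as the paper's ``the proof follows directly from~\eqref{eq:evFreq}'' handled the SIEV version.
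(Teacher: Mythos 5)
Your proof is correct and takes essentially the same route as the paper: the paper derives \eqref{eq:SICEV} in the text immediately preceding the proposition and then states that the result ``follows directly from \eqref{eq:SICEV}'', which is exactly your diagonalization-and-read-off argument made explicit (your clarification about the absence of a separate $\alphab_0$ term is a harmless, accurate remark, not a deviation). No gaps.
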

\begin{proof}
  The proof follows directly from~\eqref{eq:SICEV}.
\end{proof}

From \eqref{eq:evCEV_h}, we see that there is a substantial difference between the SICEV FIR filter and the more general SIEV FIR graph filters. Here, the modal response is a polynomial in the graph frequencies. This is similar as for the classical FIR filter \eqref{eq.FIR}, but now each frequency has a different set of coefficients. In other words, the modal response of the SICEV FIR filter is \emph{a mode-dependent polynomial}. For readers more familiar with traditional discrete-time processing, this behavior can be interpreted as applying different polynomial filters to each frequency bin (see e.g.,~\cite{van2001per}).

%Different from the modal response of the more general SIEV graph filter [cf.~\eqref{eq:evEV_h}], the frequency response is a polynomial in the graph frequencies. This is as for the classical graph filter but now each frequency has a different set of coefficients. 
%
%In other words, the modal response of the SICEV-GE is \emph{a mode-dependent polynomial}. That is, each mode is related to a polynomial which is applied to its associated graph frequency. In traditional discrete time processing this behavior can somehow be interpreted, as applying different polynomial filters to each of the frequency bins obtained after discrete Fourier transform (DFT) processing~\cite{van2001per}.

\begin{remark}
The particular form of the SICEV FIR filter allows it to match all shift-invariant polynomial responses of order $K$ and a subset of higher-order polynomials of order up to $N-1$. The latter property derives from the observation that {any shift-invariant graph filter} is a polynomial of the graph shift operator~\cite{sandryhaila2013discrete} and from the filter response in~\eqref{eq:evCEV_h}. In fact, the SICEV FIR filter is still a polynomial of the shift $\bm S$, though with a different polynomial response per graph frequency. This additional freedom extends the set of functions that can be approximated by a SICEV FIR filter of order $K$. Fig.~1(b) further illustrates the relation among different graph filters.
\end{remark}
% \begin{figure}[t]
%   \centering
%   %\includegraphics[width=0.5\textwidth]{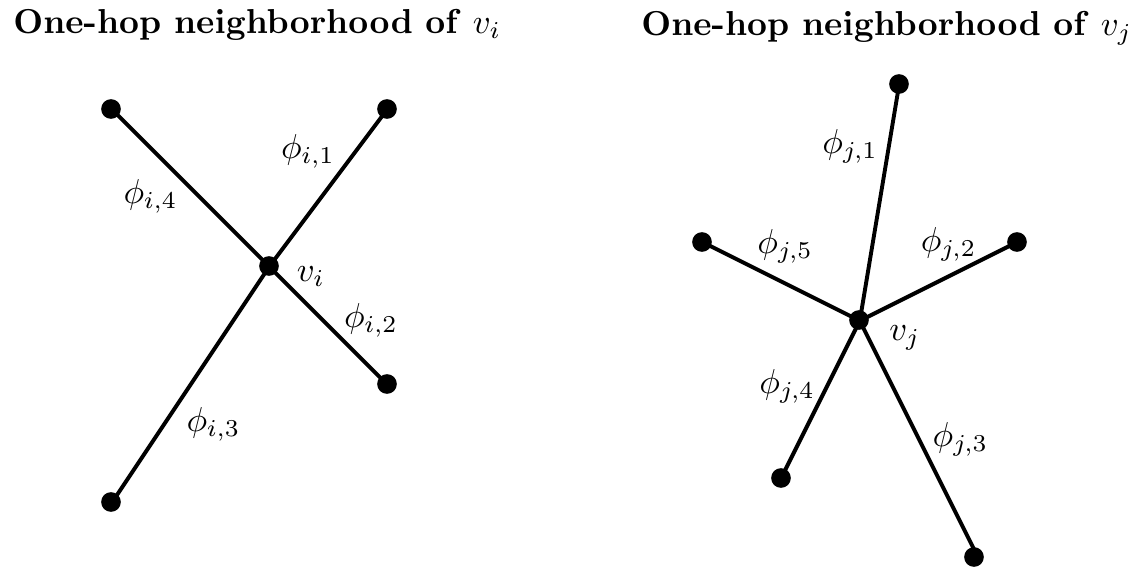}
%   \includegraphics[width=0.5\textwidth]{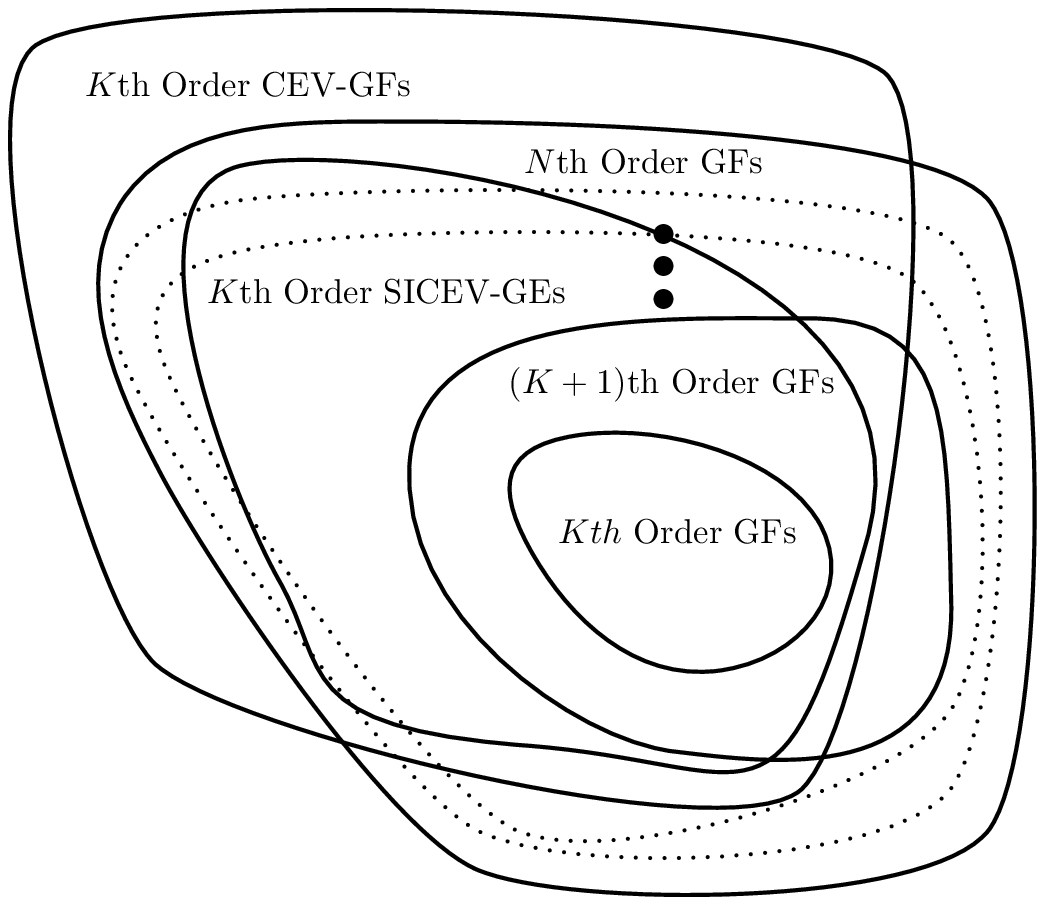}
%   \caption{Relation between polynomial graph filters and CV-GFs.}
%   \label{fig:cevgfSets}
% \end{figure}
\begin{figure*}[t]
  \centering
\begin{subfigure}[c]{.6\textwidth}
  \centering
  \includegraphics[width=\textwidth]{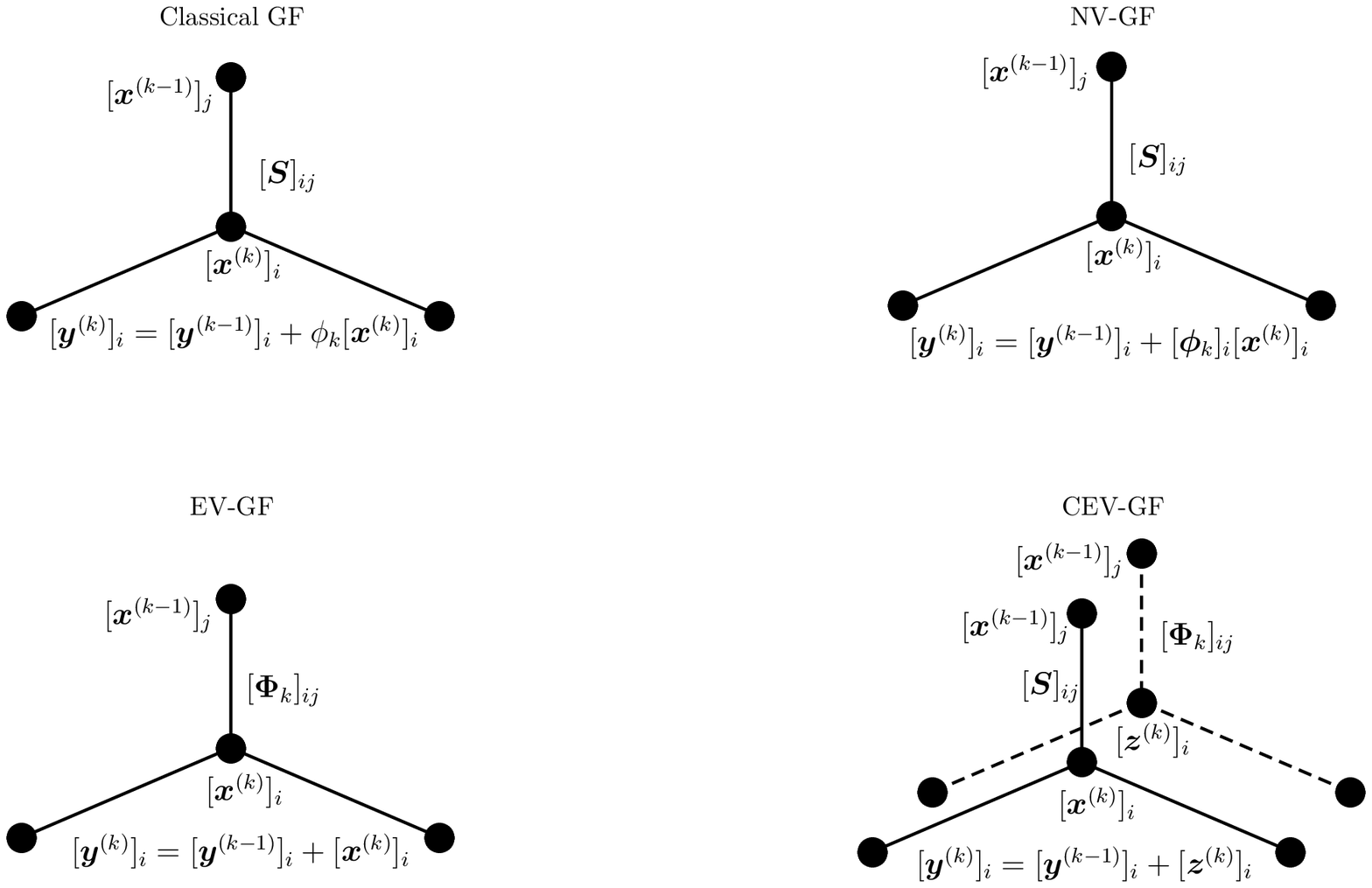}
  \caption{}
  \label{fig:cef1}
\end{subfigure}%
\centering
\begin{subfigure}[c]{.4\textwidth}
  \centering
  % \psfrag{Probability of Error}[Bc][c]{\scriptsize{Probability of Error $[P_{\rm{e}}]$}}
  \includegraphics[width=\textwidth]{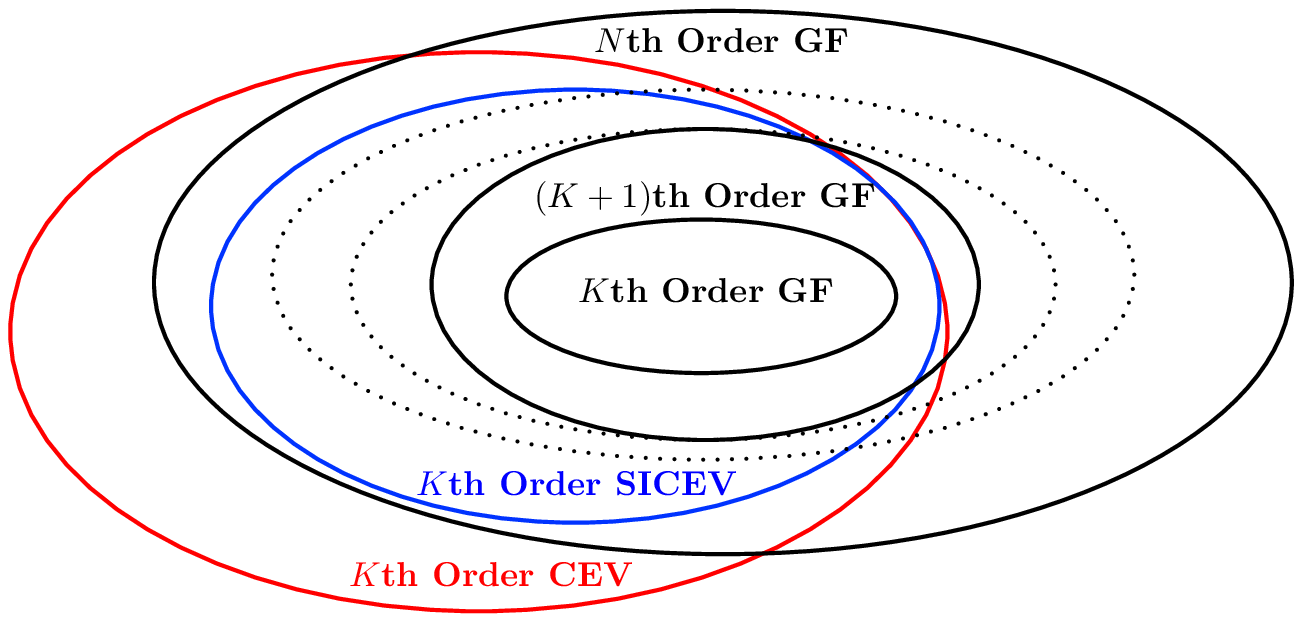}
  \caption{}
  \label{fig:cef2}
\end{subfigure}
	\caption{$(a)$ Illustration of the required transmission, scaling, and recursion performed by the different graph filters. (b) Relation between the classical and CEV FIR graph filters. This figure depicts the possibility of obtaining higher-order polynomial graph filters with reduced order CEV graph filters.}
	\label{fig:testcef}
\end{figure*}

\subsection{Filter Design}\label{subsec:CEV_desi}

%\subsubsection*{\textbf{General Form}}
%From the definition of the C-EV graph filter in~\eqref{eq:evD} it can be observed that it allows a distributed implementation. This can be seen due to the fact that every node tracks two results obtained by local interactions: $(i)$ the previous intermediate result $\bm x^{(k)}=\bm S\bm x^{(k-1)}$, and $(ii)$ the output of the filter $\bm y^{(k)} = \bm (\bm\Phi_k\odot\bm S)\bm x^{(k-1)}$, which are weighted combinations of neighbours data.

%From the construction in~\eqref{eq:evD} it is easy to see that the constrained version of the EV graph filter is distributed in nature since at each step the previous intermediate result, $\bm x^{(k)}=\bm S\bm x^{(k-1)}$, is locally weighted and combined at each node, i.e., $\bm y^{(k)} = \bm (\bm\Phi_k\odot\bm S)\bm x^{(k-1)}$, similarly to the classical and NV graph filters.

\textbf{General form.} Following a similar approach as in Section~\ref{sec.FIR-EV_design}, we can approximate a desired operator $\tilde{\bm H}$ with a CEV FIR filter by solving the problem
%
%Following the same approach as in the previous section, we can design the weighting matrices $\{\bm\Phi_k\}_{k=0}^{K}$ in the \emph{node domain} through a least squares approach:
\begin{equation}\label{eq:genminCEV}
\begin{aligned}
& \underset{\{\bm\Phi_k\}}{\text{minimize}}
& & \Vert\tilde{\bm H} - \sum\limits_{k=1}^{K}\bm \Phi_k  \bm S^{k-1}\Vert_F^2 \\
& \text{subject to}
& &{\rm supp}\{\bm \Phi_k\} = {\rm supp}\{\bm S + \bm I\} \;\forall\; k\in[K].
\end{aligned}
\end{equation}
Exploiting then the properties of the vectorization operator and the Frobenius norm, we can transform \eqref{eq:genminCEV} into
\begin{equation}\label{eq:genminCEV_vec}
\begin{aligned}
& \underset{\{\bm\phi_k\}}{\text{minimize}}
& & \Vert {\tilde{\bm h}} - \sum\limits_{k=1}^{K}(\S^{k-1} \otimes \bm I)\bm \phi_k\Vert_{2} \\
& \text{subject to}
& & \tilde{\bm h} \triangleq \text{vec}(\tilde{\bm\H}),~\bm \phi_k \triangleq \text{vec}(\bm \Phi_k),\\
& & &{\rm supp}\{\bm \Phi_k\} = {\rm supp}\{\bm S + \bm I\} \;\forall\; k\in[K].
\end{aligned}
\end{equation}
Since the support of the weighting matrices is known, problem \eqref{eq:genminCEV_vec} can be written in the reduced-size form
\begin{equation}\label{eq:genminCEV_vecc}
\begin{aligned}
& \underset{\{\bm\phi_k\}}{\text{minimize}}
& & \Vert \tilde{\bm h} -  \bm \Psi \bm \theta\Vert_{2}^2 \\
& \text{subject to}
& & \bm \Psi = [\bm I\; \check{\S}\;\cdots\;\check{\S}_K]\\
& & &\bm \theta = [\check{\bm \phi}_0^\transp\; \check{\bm \phi}_1^\transp\;\cdots \tilde{\bm \phi}_K^\transp]^\transp,
\end{aligned}
\end{equation}
%\in \mathbb{R}^{N^2\times {\rm nnz}(\bm S)\cdot K + N},\\
%\in \mathbb{R}^{{\rm nnz}(\S)\cdot K + N},
%
where, $\check{\bm\phi}_k$ is the vector $\bm\phi_k$ with the zero entries removed, $\check{\S}_k$ is the matrix $(\S^{k}\otimes \bm I)$ with the appropriate columns removed. In addition, if a regularized solution is desired, a natural penalization term might be the convex $\ell_1$-norm which induces sparsity in the solution yielding only few active coefficients.
%As the support of the weighting matrices is known, i.e., the locations of the nonzero entries are given by the nonzero entries of the shift operator, the filter (in its vectorized form) can be expressed as
%\begin{equation}
%\bm h_{\rm{c-ev}} = \sum\limits_{k=1}^{K}\tilde{\S}_{k}\tilde{\bm\phi}_k +\tilde{ \phi}_0,
%\end{equation}
%where ${\bm h}_{\rm{c-ev}} = \text{vec}({\H}_{\rm{c-ev}})$, $\tilde{\bm\phi}_k$ is the vector with the nonzero entries of $\bm\phi_k$,  and $\tilde{\S}_k$ is the matrix $(\S^{k-1}\otimes \bm I)\text{diag}(\bm s)$ with the zero columns removed. That is, as the support for each $\bm\Phi_k$ is known, only its nonzero entries must be estimated.
%%
%As a result, the optimal weights for the edges are given by the solution to the linear system
%\begin{eqnarray}
%\tilde{\bm h} &= \begin{bmatrix}
%\bm I & \tilde{\S}_1 & \ldots & \tilde{\S}_K
%\end{bmatrix}\begin{bmatrix}
%\tilde{\bm \phi}_0 \\
%\tilde{\bm \phi}_1 \\
%\vdots\\
%\tilde{\bm \phi}_K \\
%\end{bmatrix} = \bm \Psi \bm \theta,
%\label{eq:lnProb}
%\end{eqnarray}
%where $\bm \Psi \in \mathbb{R}^{N^2\times \rm{nnz}(\S)\cdot K + N}$ and $\bm \theta \in \mathbb{R}^{\rm{nnz}(\S)\cdot K + N}$. Here, $\rm{nnz}(\cdot)$ is the number of nonzero entries of a matrix. In addition, if a regularized solution is desired, a natural penalisation term might be the convex $\ell_1$-norm which induces sparsity in the solution leading to a reduced number of active coefficients.

Problem~\eqref{eq:genminCEV_vec} has a unique solution as long as $\bm\Psi$ is full column rank, i.e., ${\rm rank}(\bm\Psi) = {\rm nnz}(\S)\cdot K + N$. Otherwise, regularization must be used to obtain a unique solution.% for~\eqref{eq:lnProb}.
\begin{remark}
Besides leading to a simple least squares problem, the design of the CEV FIR filter can also be computed distributively. Given that each node knows the desired filter response and the graph shift operator (i.e., the network structure), it can be shown that by reordering the columns of $\bm\Psi$ and the entries of $\bm\theta$ the framework of splitting-over-features~\cite{manss2017distributed} can be employed for a decentralized estimation of $\bm\theta$.
\end{remark}
%\subsubsection{Distributed Coefficients Learning}

%\subsubsection*{\textbf{SICEV Graph Filter Design}}
\textbf{SICEV form.} Similar to the more general CEV FIR filter, the design of $\{\alphab_k\}_{k=1}^{K}$ for the SICEV form can be performed in a least squares fashion.

First, for a set of vectors $\{\alphab_k\}_{k=1}^{K}$ the modal response for the SICEV FIR filter reads as
\begin{equation}\label{eq.scievFIR_freps}
  \bm h_{ \lambda} = \sum\limits_{k=1}^{K} [\Bs\alphab_k \odot \lambdab^{\odot (k-1)}],
\end{equation}
where $\bm h_{\lambda}$ is obtained by stacking the modal responses, i.e., $\{h_i\}_{i=1}^{N}$, in a column vector. By using the properties of the Hadamard product, we can rewrite \eqref{eq.scievFIR_freps} as
\begin{eqnarray}
   \bm h_{\lambda} = \sum\limits_{k=1}^{K}\diagg{\lambdab^{\odot (k-1)}}\Bs\alphab_k = \sum\limits_{k=1}^{K}\bm M_k\alphab_k,
\end{eqnarray}
with $\bm M_k = \diagg{\lambdab^{\odot (k-1)}}\Bs$. Defining then $\bm M = [\bm M_1,\ldots,\bm M_K]$, and $\alphab = [\alphab_1^\transp,\ldots,\alphab_K^\transp]^\transp$, we obtain the linear relation
\begin{equation}\label{eq:Msyst}
   \bm h_{\lambda} = \bm M\alphab.
\end{equation}
Therefore, the approximation of a desired response $\tilde{\bm h}_{\bm \lambda}=~[\tilde{h}_1,\ldots,\tilde{h}_N]^\transp$ consists of solving the least squares problem
%
%to approximate a desired response $\tilde{\bm h}_{\bm \lambda}=~[\tilde{h}_1,\ldots,\tilde{h}_N]^T$ we can solve the following least squares problem
\begin{equation}
	\begin{array}{ll}
		\underset{\alphab\in\mathbb{R}^{d(K+1)}}{\text{minimize}} & \Vert \tilde{\bm h}_{\lambda} - \bm M \alphab \Vert_2
	\end{array}
\end{equation}
which has a unique solution when $\bm M$ is full column rank, i.e., ${\rm rank}(\bm M) = d(K+1) \leq N$.

\section{Edge-Variant IIR Graph Filters}
\label{sec:iir}

We now extend the edge-variant filtering concept to the class of IIR graph filters. As stated in Section~\ref{sec:prem}, we focus on the basic building block of these filters, i.e., the ARMA$_1$ recursion \eqref{eq:ARMA}. We follow the same organization of the former two sections, by introducing the edge-variant ARMA$_1$ structure in Section~\ref{eq.evARMA1stuc}, the shift-invariant version in Section~\ref{subsec:iirSIEV}, and the design strategies in Section~\ref{sub:armaDesg}.

\subsection{Edge-Variant ARMA$_1$}\label{eq.evARMA1stuc}
We build an edge-variant ARMA$_1$ recursion on graphs by modifying~\eqref{eq:ARMA} as
\begin{equation}\label{eq:ARMA_EV}
\y_t = \bm\Phi_1\y_{t-1} + \bm\Phi_0\x,
\end{equation}
where $\bm\Phi_0$ and $\Phib_1$ are the edge-weighting matrices having the support of $\bm S + \bm I$ that respectively weight locally the entries of $\y_{t-1}$ and $\x$. Proceeding similarly as in \cite{isufi2017autoregressive}, for $\|\Phib_1\|_2 < 1$, the steady-state output of \eqref{eq:ARMA_EV} is
\begin{equation}
  \y = \lim_{t \to \infty}\y_t = (\I - \Phib_1)^{-1}\bm\Phi_0\x \triangleq \H_{\rm eva_{1}}\x,\label{eq:ARMA_EV_ex}
\end{equation}
where now we notice the inverse relation w.r.t. the edge-weighting matrix $\Phib_1$. Recursion \eqref{eq:ARMA_EV} converges to \eqref{eq:ARMA_EV_ex} linearly with a rate governed by $\|\Phib_1\|_2$. The classical form \eqref{eq:ARMA} can be obtained by substituting $\Phib_1 = \psi\bm S$ and $\bm\Phi_0 = \varphi\I$.

The edge-variant ARMA$_1$ filter presents the same frequency interpretation challenges as the FIR filter counterpart. Therefore, we next analyze the shift-invariant version of it and we will see a rational modal response.

%Despite the fact that obtaining the filter frequency response of the edge-varying ARMA$_1$ graph filter $\H_{\rm evarma_{1}}$ proves to be a challenging task, similarly to the FIR case, we show how the response of such filters can be characterized for a subset of EV-ARMA$_{1}$ filters whose matrices $\{\Phib_1, \Phib_0\}$ are diagonalizable by the eigenbasis of the graph shift operator.

\subsection{Shift-Invariant EV ARMA$_1$}\label{subsec:iirSIEV}
By limiting the choices of $\{\Phib_0,\Phib_1\}$ to the one that satisfy $(A.1)$ and $(A.2)$, we obtain the shift-invariant edge-variant ARMA$_1$ (SIEVA$_1$) graph filter
%
%First, using Theorem~\ref{eq:th1} and considering that $(A.1)$ and $(A.2)$ are met for $\{\Phib_1,\Phib_0\}$, we obtain the following expression for~\eqref{eq:ARMA_EV_ex}:
\begin{equation}\label{eq:ARMA_exU}
  \H_{\rm sieva_{1}} = \Ub[(\I - \diagg{\Bs\alphab_1})^{-1}\diagg{\Bs\alphab_{0}}]\Ub^{-1},
\end{equation}
where $\alphab_{0}$ and $\alphab_1$ are the respective basis expansion vectors of $\Phib_0$ and $\Phib_1$ onto the nullspace of $\Tb_{\Ub}^{\A}$ (see Proposition~\ref{prop:nullSpace}). From \eqref{eq:ARMA_exU}, we see that the inverse relation that appears in \eqref{eq:ARMA_EV_ex} indeed appears as a function affecting the graph eigenmodes. The following proposition concludes this section by stating this finding in a formal way.
%
%
%From~\eqref{eq:ARMA_exU} we can see how the graph filter shapes the frequency spectra of the input signal. The following proposition provide an explicit form of the modal response for the family of \emph{shift-invariant EV-ARMA$_1$ eigenfilters} (SIEVA-GEs).
\begin{myprop}{\textnormal{(Modal Response of SIEVA$_1$)}} An ARMA$_1$ graph filter of the form~\eqref{eq:ARMA_EV_ex} satisfying $(A.1)$ and $(A.2)$ for $K = 1$ has $i$th modal response
\begin{equation}\label{eq:evSIEVA_h}
 h_i = \frac{(\bb_{\U,i}^{\mathcal{A}})^\transp\alphab_{0}}{1-(\bb_{\U,i}^{\mathcal{A}})^\transp\alphab_1}
\end{equation}
where $(\bb_{\Ub,i}^{\A})^\transp$ is the $i$th row of the matrix $\Bs$.
\end{myprop}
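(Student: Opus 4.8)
The plan is to start from the steady-state expression~\eqref{eq:ARMA_EV_ex}, namely $\H_{\rm eva_{1}} = (\I - \Phib_1)^{-1}\Phib_0$, and substitute the shift-invariant parametrization dictated by $(A.1)$ and $(A.2)$. Under these assumptions, Proposition~\ref{prop:nullSpace} gives $\Phib_0 = \Ub\diagg{\Bs\alphab_0}\Ub^{-1}$ and $\Phib_1 = \Ub\diagg{\Bs\alphab_1}\Ub^{-1}$, where $\alphab_0,\alphab_1$ are the basis expansion coefficient vectors onto $\snull{\Tb_{\Ub}^{\A}}$. Plugging these into the steady-state formula yields the factored form~\eqref{eq:ARMA_exU}.

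Next I would exploit the fact that $\Ub$ is invertible to pull the eigenbasis through the matrix inverse: since $\I - \Phib_1 = \Ub(\I - \diagg{\Bs\alphab_1})\Ub^{-1}$, we have $(\I - \Phib_1)^{-1} = \Ub(\I - \diagg{\Bs\alphab_1})^{-1}\Ub^{-1}$, provided the middle factor is invertible (which is guaranteed by the convergence condition $\|\Phib_1\|_2 < 1$, forcing each diagonal entry $1 - (\bb_{\Ub,i}^{\A})^\transp\alphab_1$ to be nonzero). Multiplying the two diagonalized factors collapses the $\Ub^{-1}\Ub$ in the middle, giving
\begin{equation}
\H_{\rm sieva_{1}} = \Ub(\I - \diagg{\Bs\alphab_1})^{-1}\diagg{\Bs\alphab_0}\Ub^{-1},
\end{equation}
which is a diagonal matrix conjugated by $\Ub$. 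Reading off the $i$th diagonal entry via Definition~\ref{def:modal} and using that $(\bb_{\Ub,i}^{\A})^\transp$ is the $i$th row of $\Bs$ — so that the $i$th entry of $\diagg{\Bs\alphab_1}$ is $(\bb_{\Ub,i}^{\A})^\transp\alphab_1$ and similarly for $\alphab_0$ — the product of a diagonal inverse and a diagonal matrix is entrywise, giving exactly $h_i = (\bb_{\U,i}^{\mathcal{A}})^\transp\alphab_0 / (1 - (\bb_{\U,i}^{\mathcal{A}})^\transp\alphab_1)$, which is~\eqref{eq:evSIEVA_h}.

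There is no real obstacle here: the argument is a short computation once the parametrization of Proposition~\ref{prop:nullSpace} is in hand, and indeed the authors' stated proof is simply ``the proof follows directly from~\eqref{eq:ARMA_exU}.'' The only point deserving a word of care is the well-posedness of the inverse — one should note explicitly that $\|\Phib_1\|_2 < 1$ (the same condition that guarantees convergence of the recursion~\eqref{eq:ARMA_EV} to its steady state) implies $|(\bb_{\Ub,i}^{\A})^\transp\alphab_1| < 1$ for every mode $i$, since these quantities are eigenvalues of $\Phib_1$ and are bounded in modulus by the spectral norm; hence no denominator in~\eqref{eq:evSIEVA_h} vanishes and the modal response is well-defined for all $i \in [N]$.
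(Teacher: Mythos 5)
Your proof is correct and follows the same route as the paper: substituting the Proposition~\ref{prop:nullSpace} parametrization of $\Phib_0$ and $\Phib_1$ into the steady state $(\I-\Phib_1)^{-1}\Phib_0$, collapsing the eigenbasis to obtain~\eqref{eq:ARMA_exU}, and reading the $i$th diagonal entry as the modal response. Your added remark that $\|\Phib_1\|_2<1$ bounds each eigenvalue $(\bb_{\Ub,i}^{\A})^\transp\alphab_1$ in modulus below one, so no denominator in~\eqref{eq:evSIEVA_h} vanishes, is a correct and worthwhile detail the paper leaves implicit.
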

\begin{proof}
  The proof follows directly from~\eqref{eq:ARMA_exU}.
\end{proof}
% From~\eqref{eq:evSIEVA_h}, for $\D=\varphi\I$, we can observe the following:
% \begin{remark}\label{rmk:ARMA}
%   The modal response of a SIEVA-GE for the case when the matrix $\D = \varphi\I$ is 
%   \begin{equation}\label{eq:armaEasy}
%     h_i = \frac{\varphi}{1 - (\bb_{\Ub,i}^{\A})^T\alphab}.
%   \end{equation}
% Further, for $\Phib = \psi\bm 1\bm 1^T$ this expression reduces to the classical ARMA$_1$ graph frequency response in~\eqref{eq:ARMAClas}. This result shows once again the generality of the proposed family of filters.
% \end{remark}

\subsection{Filter Design}\label{sub:armaDesg}
%In the following, different designs are presented for EV-ARMA$_1$ graph filters. We first introduce the design for the most general form the node domain. Then we specialize the design for the case of the SIEVA-GEs in the spectral domain to leverage the reduced dimensionality of the resulting optimization problem.
{\textbf{Edge-Variant ARMA$_1$ form.}} Here, we extend the design approach of \cite{isufi2017autoregressivest} and design $\{\Phib_0, \bm{\Phi}_1\}$ by using the Prony's method. For $\tilde\H$ being the desired operator, we can define the fitting error matrix 
\begin{equation}\label{eq:erromtx}
\bm E = \tilde\H - (\bm I - \Phib_1)^{-1}\Phib_0,
\end{equation}
which similar to the classical Prony design presents nonlinearities in the denominator coefficients, i.e., in $\Phib_1$. To tackle these issues, we consider the modified fitting error matrix
\begin{equation}\label{eq:htilde1}
 \bm E' = \tilde\H - \Phib_1\tilde\H -  \Phib_0,
\end{equation}
which is obtained by multiplying both sides of~\eqref{eq:erromtx} by $\bm I - \Phib_1$.

%As minimizing a norm function of~\eqref{eq:erromtx} is hard to handle due to the nonlinearities in the expression, we focus on the minimization of the Frobenius norm of the modified error~\eqref{eq:htilde1} for obtaining the filter coefficients. 
This way, the filter design is transformed in solving the convex optimization problem
\begin{equation}\label{eq:probProny1}
\begin{aligned}
& \underset{\bm\Phi_0,\Phib_1}{\text{minimize}}
& & \Vert\tilde\H - \Phib_1\tilde\H - \Phib_0\Vert\\
& \text{subject to}
& & \Vert\Phib_1\Vert_2 < \delta,~~\delta < 1,\\
& & &{\rm supp}\{\bm \Phi_0\}  = {\rm supp}\{\bm \Phi_1\} = {\rm supp}\{\bm S + \bm I\}.
\end{aligned}
\end{equation}
The objective function in \eqref{eq:probProny1} aims at reducing the modified error $\bm E'$, while the first constraint trades the convergence rate of~\eqref{eq:ARMA_EV} with approximation accuracy.

\textbf{SIEVA$_1$ form.} Following the same idea as in~\eqref{eq:erromtx}-\eqref{eq:probProny1}, the modified fitting error of a SIEVA$_1$ graph filter is
\begin{equation}\label{eq.sieva1err}
e_i' = \tilde h_i - \tilde h_i(\bs)^T\alphab_1 - (\bs)^T\alphab_0,
\end{equation}
with $\tilde h_i$, $(\bs)^\transp\alphab_0$, and $(\bs)^\transp\alphab_1$ denoting respectively the desired modal response and the eigenvalues of $\Phib_0$ and $\Phib_1$ w.r.t. the $i$th mode. In vector form, \eqref{eq.sieva1err} is be written as
\begin{eqnarray}
\bm e' &=& 
\tilde{\bm h}_{\lambda} - \bm \Psi_{\lambda}\bar\alphab\label{eq:linS1},
\end{eqnarray}
with ${\bm e'} = [e_1',\ldots,e_N']^\transp$, $\tilde{\bm h}_{\lambda} = [\tilde{h}_1,\ldots,\tilde{h}_N]^\transp$, $\bm\Psi_{\lambda} = [\Bs,\;\diagg{\tilde{\bm h}_{\lambda}}\Bs]$, and $\bar\alphab = [\alphab_0^\transp\;\alphab_1^\transp]^\transp$. Then, $\{\alphab_0,\alphab_1\}$ can be estimated as the solution of the constrained least squares problem 
\begin{equation}\label{eq:probsieva1}
\begin{aligned}
& \underset{\alphab_0,\alphab_1\in\mathbb{R}^d}{\text{minimize}}
& & \Vert\tilde{\bm h}_{\lambda} - \bm\Psi_{\lambda}{\alphab}\Vert_2^2 \\
& \text{subject to}
& & \Vert \Bs\alphab_1 \Vert_\infty < \delta,~\delta < 1, \alphab = [\alphab_0^\transp\;\alphab_1^\transp]^\transp.
\end{aligned}
\end{equation}
Problem~\eqref{eq:probsieva1} again aims at minimizing the modified fitting error, while tuning the convergence rate through $\delta$.

Differently from the general EV-ARMA$_1$, here the number of unknowns is reduced to $2d$, as now only the vectors $\alphab_0$ and $\alphab_1$ need to be designed. Therefore, due to this low dimensionality, one can also opt for global optimization solvers to find an acceptable local minimum of the true error (i.e., the equivalent of~\eqref{eq:erromtx}).

\begin{remark}
The approximation accuracy of the EV ARMA$_1$ filters can be further improved by following the Shank's method~\cite{shanks1967recursion} used in \cite{isufi2017autoregressive, isufi2017autoregressivest}, or the iterative least-squares approach proposed in~\cite{liu2017filter}. These methods have shown to improve the approximation accuracy of Prony's design by not only taking the modified fitting error into account but also the true one. However, as this idea does not add much to this work, interested readers are redirected to the above references for more details.
\end{remark} 
 
 \begin{table*}
 	\caption{Summary of the different graph filters. $(^*)$ indicates a contribution of this work. Here, ${\rm numIt}$ stands for the maximum number of iterations that the recursion is run.}
	\begin{center}
	\begin{tabular}{|c|c|c|c|c|c|}
		\hline
		\textbf{Filter Type} & \textbf{Expression} & \textbf{Shift-Invariant} & \textbf{Design Strategy} & \textbf{Distributed Costs} & \textbf{Coefficients} \\ [-1em] \\ 
		\hline \\ [-1em]
			Classical FIR~\cite{sandryhaila2013discrete} & 
				$\H_{\text{c}} \triangleq \sum_{k = 0}^K\phi_k\S^k$ & 
				always & 
				LS~\cite{sandryhaila2013discrete}, Chebyshev~\cite{taubin1996optimal,shuman2011distributed} &
				$\mathcal{O}(MK)$ &
				scalars: $\{\phi_k\}$ \\ [-1em] \\
		\hline \\ [-1em]
			NV FIR~\cite{segarra2017optimal} &
				$\H_{\text{nv}} \triangleq \sum_{k = 0}^K{\rm diag}(\bphi_k)\S^k$ &
				not in general &
				LS, convex program~\cite{segarra2017optimal}&
				$\mathcal{O}(MK)$ &
				vectors : $\{\bm{\phi}_k\}$ \\ [-1em]\\ 
		\hline \\ [-1em]
			General EV FIR ($^*$) &
			$\bm{H}_{\rm{ev}} \triangleq \sum\limits_{k=1}^{K}(\bm\Phi_{k}\ldots\bm\Phi_{1})$ &
			not in general &
			iterative design [Sec.~\ref{sec.FIR-EV_design}] &
			$\mathcal{O}(MK)$ &
			matrices : $\{\bm\Phi_k\}$ \\
		\hline \\ [-1em]
			SIEV FIR ($^*$) &
			%$\Hb_{\rm siev} = \Ub\bigg[ \sum\limits_{k=1}^{K}\prod_{j=1}^{k}\diagg{\Bs\alphab_j} + \diagg{\Bs\alphab_{0}} \bigg]\Ub^{-1}$ &
			~\eqref{eq:evFreq}&
			always &
			iterative design [Sec.~\ref{sec.FIR-EV_design}] &
			$\mathcal{O}(MK)$ &
			vectors : $\{\alphab_k\}$ \\ [-1em] \\
		\hline	\\ [-1em]				
			CEV FIR ($^*$) &
			$\H_{\rm{cev}} \triangleq \sum\limits_{k=1}^{K}\bm \Phi_k  \bm S^{k-1}$ &
			not in general &
			LS [Sec.~\ref{subsec:CEV_desi}] &
			$\mathcal{O}(MK)$ &
			matrices :$\{\bm\Phi_k\}$ \\ [-1em] \\
		\hline \\ [-1em]
			SICEV FIR ($^*$) &
			~\eqref{eq:evD} &
			always &
			LS [Sec.~\ref{subsec:CEV_desi}] &
			$\mathcal{O}(MK)$ &
			vectors :$\{\alphab_k\}$ \\ [-1em] \\
		\hline \\ [-1em]
			Classical ARMA$_1$~\cite{isufi2017autoregressive} &
			$\H_{{\rm arma}_{1}} \triangleq \varphi(\I - \psi\S)^{-1}$ &
			always &
			closed-form, iterative design~\cite{isufi2017autoregressive} &
			$\mathcal{O}({\rm numIt}\cdot M)$ &
			scalars : $\{\varphi,\psi\}$\\ [-1em] \\
		\hline \\ [-1em]
			EV ARMA$_1$ ($^*$) &
			$\H_{{\rm evarma}_{1}} \triangleq (\I - \Phib_1)^{-1}\Phib_0$ &
			not in general &
			two-step design [Sec.~\ref{sub:armaDesg}] &
			$\mathcal{O}({\rm numIt}\cdot M)$ &
			matrices : $\{\Phib_0,\Phib_1\}$ \\ [-1em] \\
		\hline \\ [-1em]
			SIEVA$_1$ ($^*$) &
			%$\H_{{\rm sievarma}_{1}} \triangleq (\I - \Phib_1)^{-1}\Phib_0$ &
			~\eqref{eq:ARMA_exU}&
			always &
			two-step design [Sec.~\ref{sub:armaDesg}] &
			$\mathcal{O}({\rm numIt}\cdot M)$ &
			vectors : $\{\alphab_0,\alphab_1\}$ \\ [-1em] \\
		\hline
	\end{tabular}
	\end{center}
	\label{tab.filts}
\end{table*}

\section{Numerical Results}
\label{sec:num}
 
We now present a set of numerical examples to corroborate the applicability of the proposed filters for several distributed tasks. For convenience, Table~\ref{tab.filts} presents a summary of the different graph filters mentioned in this work along with their specifications. In our simulations\footnote{The code to reproduce the figures in this paper can be found at https://gitlab.com/fruzti/graphFilterAdvances}, we made use of the GSP toolbox~\cite{perraudin2014gspbox}.

%The generalization of graph filters through edge-variant graph filters and its benefits are illustrated with numerical experiments on synthetic as well as real datasets for distinct applications within graph signal processing and distributed array signal processing. 
  %-+-+-+-+-+-+-+-+-+-+-+-+-+-+-+-+-+-+-+-+-+-+-+-+-+-+-+-+-+-+-+-+-+-+-+-+-+-+-
 %  \subsection*{Graph Filter Approximation}
	% \begin{figure}[t]
	% 	  \centering
	% 	  \includegraphics[width=0.5\textwidth]{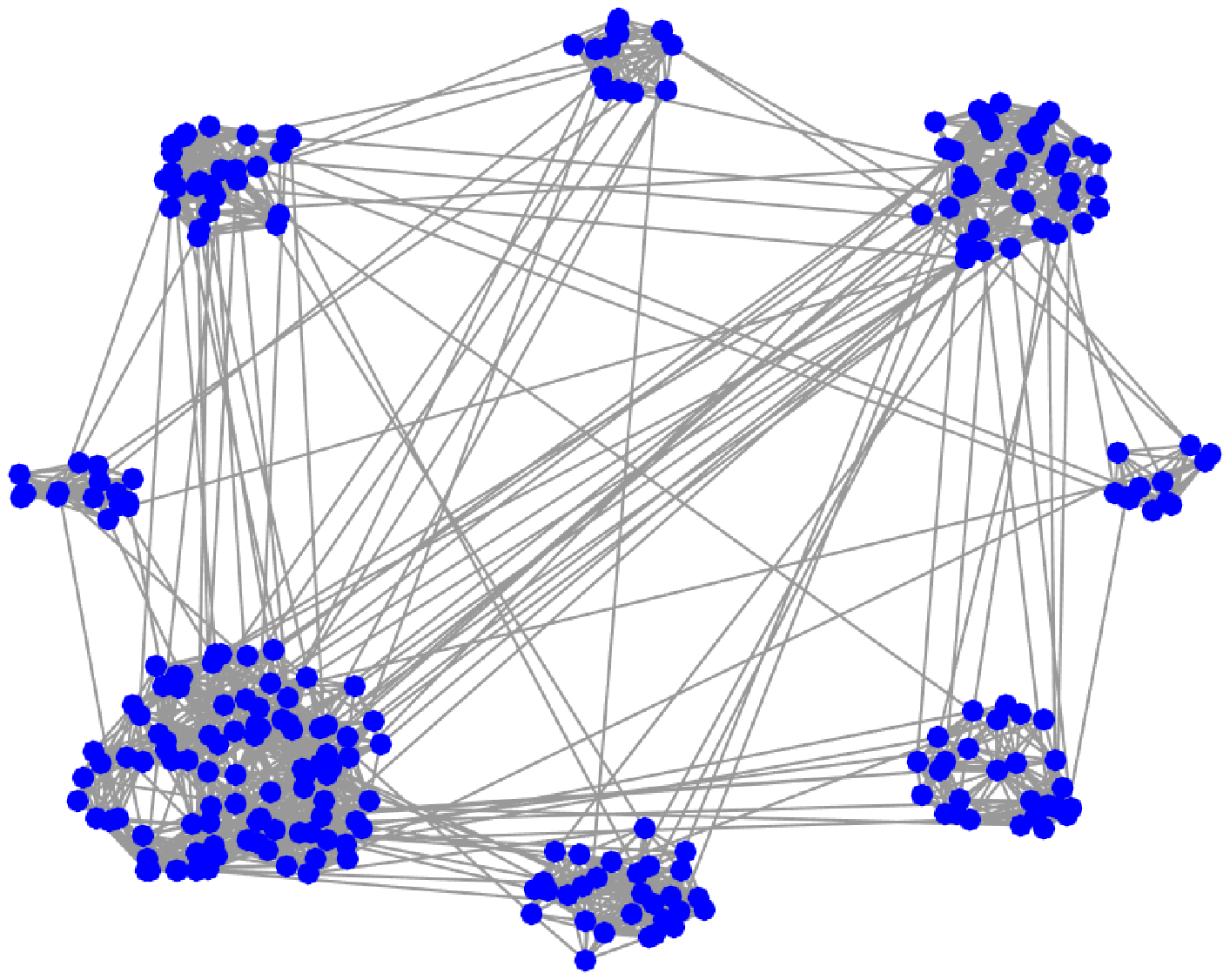}
	% 	  \caption{Community graph with $N = 256$ nodes used for examples of approximation of graph filters.}
	%   \label{fig:comGraph}
	% \end{figure}
%-+-+-+-+-+-+-+-+-+-+-+-+-+-+-+-+-+-+-+-+-+-+-+-+-+-+-+-+-+-+-+-+-+-+-+-+-+-+-
\subsection{Graph Filter Approximation}
We here test the proposed FIR graph filters in approximating a user-provided frequency response. We consider a random community graph of $N=256$ nodes and shift operator $\S = \L$. The frequency responses of interest are two commonly used responses in the GSP community, i.e.,
\begin{itemize}
	\item [$(i)$] the exponential kernel
	$$\tilde{h}(\lambda) \coloneqq e^{-\gamma(\lambda - \mu)^2},$$
	with $\gamma$ and $\mu$ being the spectrum decaying factor and the central parameter respectively;
	\item [$(ii)$] the ideal low-pass filter
	$$
	\tilde{h}(\lambda) = \begin{cases}
	1 & 0 \leq \lambda \leq \lambda_{\text{c}} \\
	0 & \text{otherwise},
	\end{cases}
	$$ 
	with $\lambda_{\text{c}}$ being the cut-off frequency.
\end{itemize}
The approximation accuracy of the different filters is evaluated in terms of the normalized squared error $\text{NSE} = \Vert\tilde{\bm{H}} - \bm H_{\rm{fit}}\Vert_F^2/\Vert\tilde{\bm{H}}\Vert_F^2$. $\bm H_{\rm{fit}}$ stands for the filter matrix of the fitted filters.
\begin{figure*}[t]
		  \centering
		  \begin{subfigure}[b]{0.33\linewidth}
		  	  \psfrag{NMSE}{\small NSE}
		  	  \psfrag{Filter Order [k]}[tc][cc]{\small Filter Order $[K]$}
			  \psfrag{CEV}{\fontsize{8}{8}\selectfont{CEV}}
			  \psfrag{Classical FIRaaa}{\fontsize{8}{8}\selectfont{Classical FIR}}
			  \psfrag{NVa}{\fontsize{8}{8}\selectfont{NV}}
			  \psfrag{SIEVaa}{\fontsize{8}{8}\selectfont{SIEV}}
			  %\psfrag{CEV-Filter}{\tiny CEV FIR}
			  %\psfrag{FIR}{\tiny Classical FIR}
			  %\psfrag{NV-Filter (k) Exponential Kernel}{\tiny NV FIR}
		      %\psfrag{SIEV-Filter (k) Exponential Kernel}{\tiny SIEV FIR}
			  \includegraphics[width=1.1\textwidth]{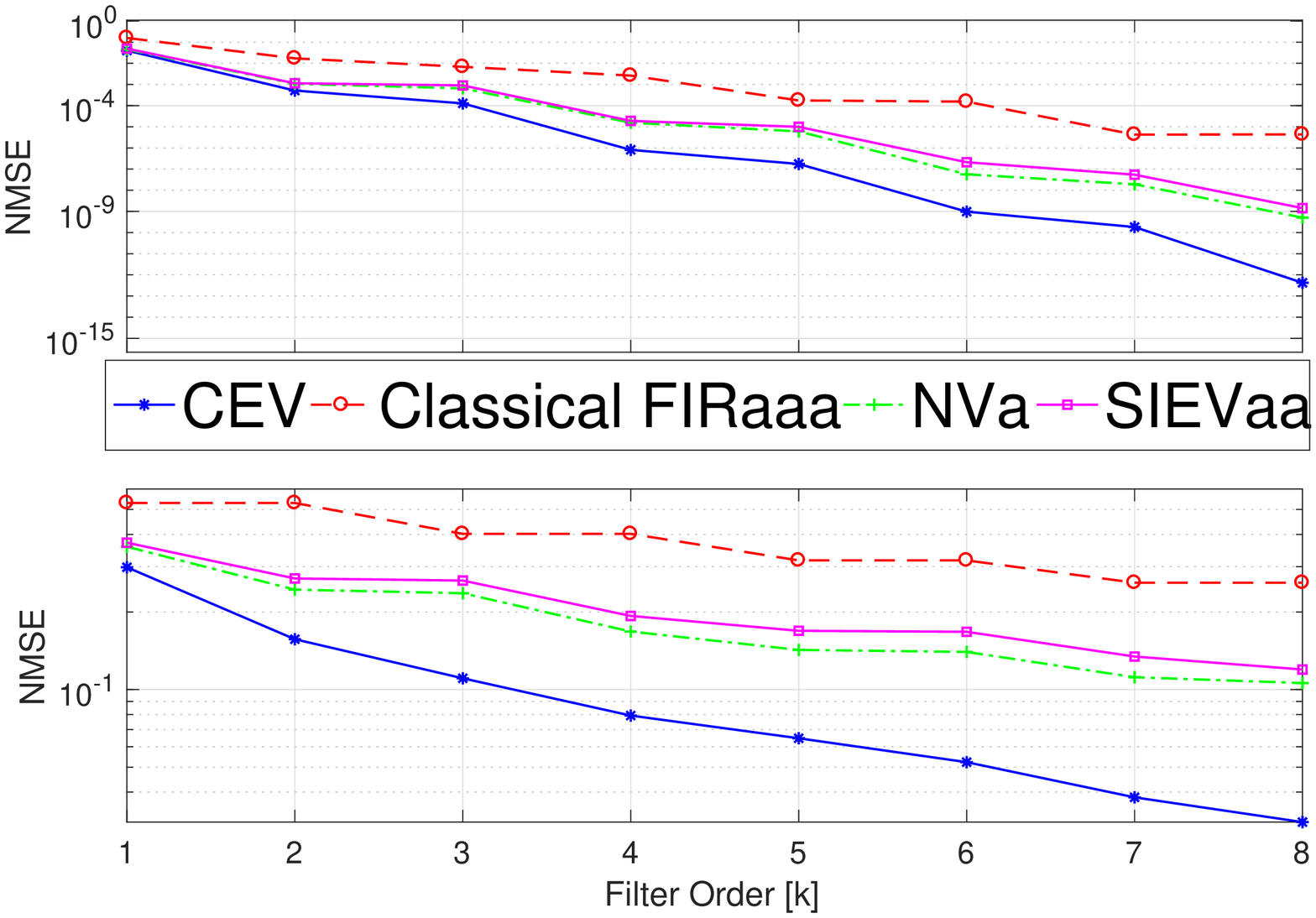}
			  \caption{}
			  \label{}
		  \end{subfigure}%
		  \begin{subfigure}[b]{0.33\linewidth}
		  	  \psfrag{h(l)}[tc][tc]{\tiny $h(\lambda)$}	
		  	  \psfrag{Eigenvalues}[tc][cc]{\small Eigenvalues $\lambda$}
			  \includegraphics[width=\textwidth]{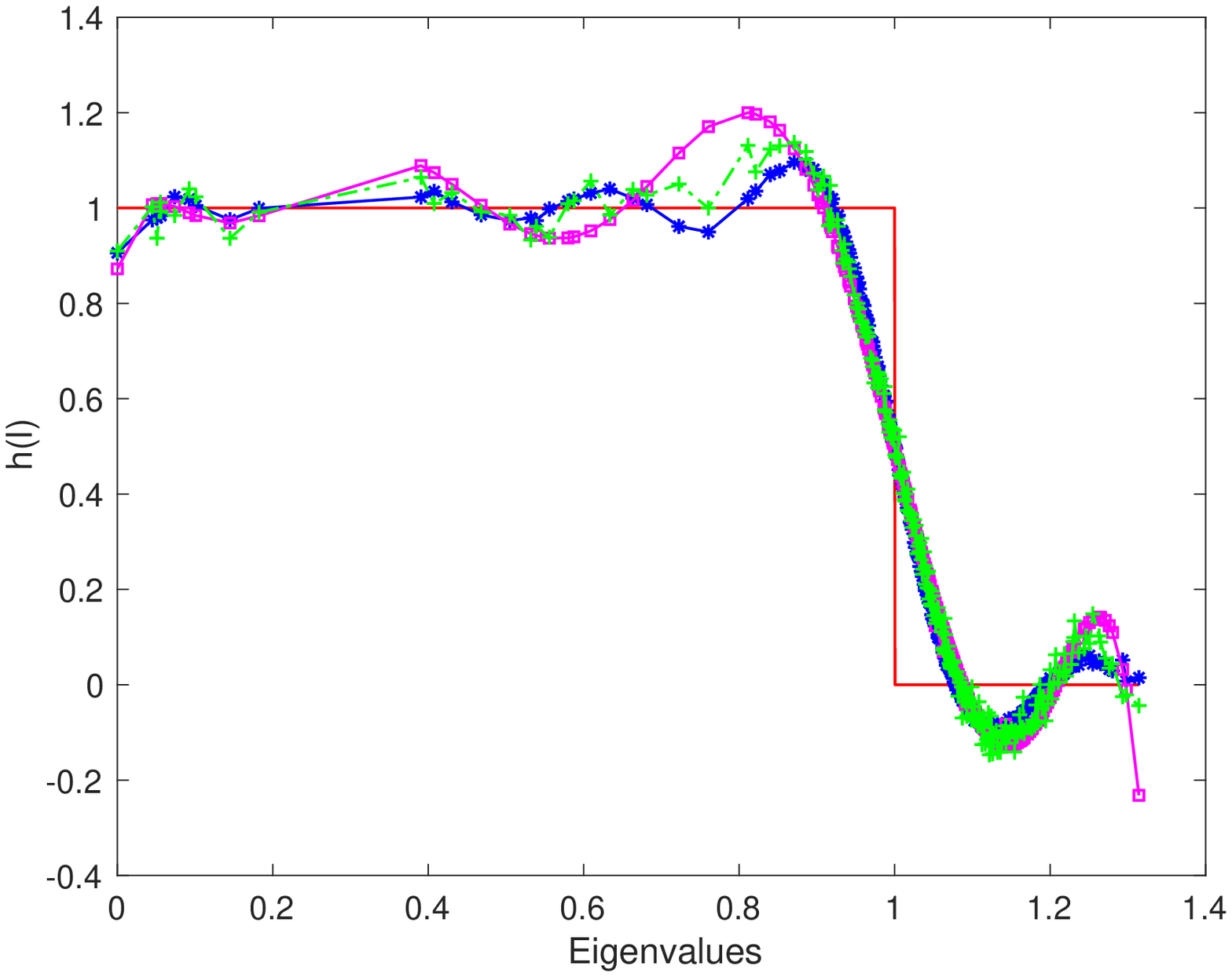}
			  \caption{}
			  \label{}
		  \end{subfigure}%
		  \begin{subfigure}[b]{0.33\linewidth}
		  	  \psfrag{h(l)}[tc][tc]{\tiny $h(\lambda)$}	
		  	  \psfrag{Eigenvalues}[tc][cc]{\small Eigenvalues $[\lambda]$}
			  \includegraphics[width=\textwidth]{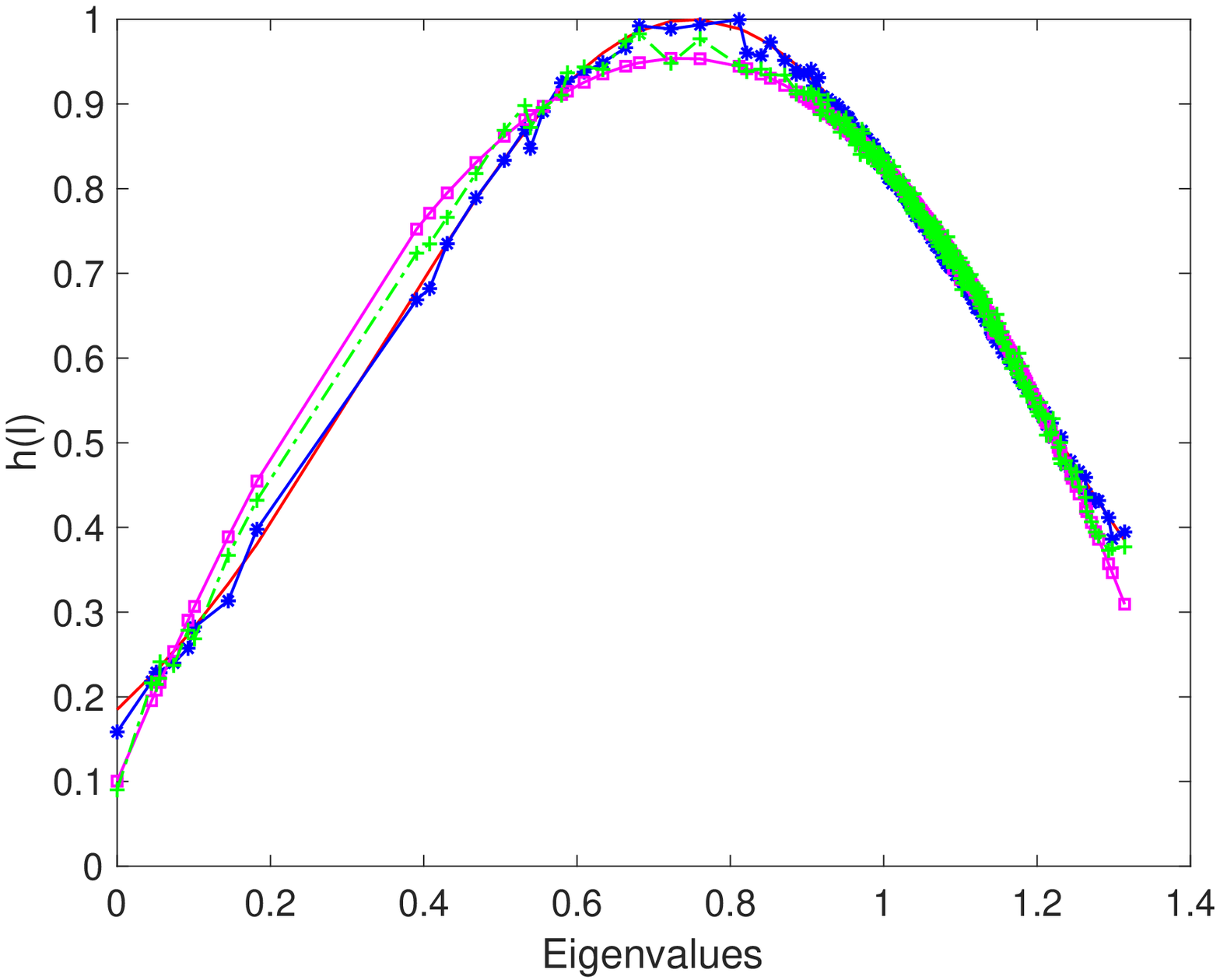}
			  \caption{}
			  \label{}
		  \end{subfigure}	  		  
%		\caption{(a) Error comparison between the proposed EV graph filters, the NV graph filter and the classical FIR for different orders. (Top) Performance for the exponential kernel. (Bottom) Performance in the ideal low pass scenario.}
		\caption{(a) NSE vs. filter order for different FIR graph filters. (Top) Results in approximating a low-pass response. (Bottom) Results in approximating the exponential kernel response. (b) Frequency response of the graph filters when approximating a perfect low pass filter. (c) Frequency response of the graph filters when approximating an exponential kernel with parameters $\mu = 0.75$ and $\gamma = 3$.}
	  \label{fig:compAppxFilt}
	\end{figure*}
	
Fig.~\ref{fig:compAppxFilt} illustrates the performances of the different filters. In the exponential kernel scenario, we observe that the CEV FIR filter outperforms the other alternatives by showing a performance improvement of up to two orders of magnitude. A similar result is also seen in the low-pass example, where the CEV FIR filter achieves the error floor for $K=8$, while the NV graph filter for $K = 13$ and the \emph{classical} FIR filter for $K = 17$. Additionally, we observe that the SIEV FIR filter achieves the same performance as the NV FIR filter. This result suggests that despite the additional DoF of the SIEV FIR filter, the nonconvex design strategy \eqref{eq:cstMinSIEV} yields in a local minimum that does not exploit the filter full capabilities.

%In the low pass scenario, we remark that the CEV graph filter achieves the error floor for $K=8$, while the NV graph filter for $K = 13$ and the \emph{classical} FIR for $K = 17$. In the exponential kernel scenario, we observe again that the constraint EV outperforms the classical and NV graph filters, showing a performance gap of two orders of magnitude. 

%As seen from this results, it is clear that even though the SIEV graph filter has many more degrees of freedom, the tendency of the optimization procedures to only obtain a local optima leads to performance comparable to the one of the NV graph filter. In these examples, the performance of the SIEV and NV graph filters are close to each other. 

The above observations further motivate the use of the CEV FIR filter, which trades off better the simplicity of the design and the available DoF. In fact, even though the CEV FIR filter has less DoF than the SIEV graph filter, it performs better than the latter. Similarly, the larger DoF of the CEV FIR filter compared to the NV FIR filter (i.e., $\rm{nnz}(\textbf{S})\cdot K + N$ vs $N\cdot (K+1)$) allow the CEV FIR filter to better approximate the desired response. In a distributed setting, these benefits translate into communication and computational savings.

%In turn, these characteristics are translated in savings in communication and computational complexity compared to a distributed implementation of classical FIR graph filters. %However, due to this design freedom, we would like to remark that the filter design for the C-EV graph filters might suffer from numerical issues for large filter orders due to the conditioning of the matrix $\bm \Psi$, i.e., the number of parameters to be estimated $(\rm{nnz}(\S)\cdot K + N)$ becomes larger than the effective rank of $\bm\Psi$. Similar problems might arise for NV filters of high order, although the number of parameters to estimate are smaller compared to the C-EV graph filter.
  %-+-+-+-+-+-+-+-+-+-+-+-+-+-+-+-+-+-+-+-+-+-+-+-+-+-+-+-+-+-+-+-+-+-+-+-+-+-+-
  \subsection{Distributed Linear Operator Approximation}

Several distributed tasks of interest consist of performing a linear operation $\bm{A}\in\mathbb{R}^{N\times N}$ over a network. This can be for instance a beamforming matrix over a distributed array or consensus matrix. In most of these cases, such linear operators cannot be straightforwardly distributed. In this section, we illustrate the capabilities of the developed graph filters in addressing this task.

Given a desired linear operator $\bm{A}$, we aim at implementing this linear operator distributively through the solution of the optimization problem
\begin{equation}\label{eq:probLinop}
\begin{aligned}
& \underset{\bm \theta}{\text{minimize}}
& & \Vert \bm{A} - \bm H(\bm S, \bm \theta) \Vert \\
& \text{subject to}
& & \bm \theta \in \Theta,\\
\end{aligned}
\end{equation}
where $\bm H(\bm S, \bm \theta)$ stands for the used graph filter parametrized by the shift $\bm S$ and a set of parameters $\bm \theta$ living in the domain $\Theta$.
  
\textbf{Distributed Consensus.} For distributed consensus, the operator ${\bm{A}}$ has the form ${\bm{A}} = \frac{1}{N}\bm 1\bm1^T$, which for $\bm S = \bm L$ translates into a low-pass graph filter passing only the DC signal component.
	
Fig.~\ref{fig:consensusFig} compares the fitting $\text{NSE} = \Vert\bm{A} - \bm H_{\rm{fit}}\Vert_F^2/\Vert\tilde{\H}\Vert_F^2$ for the different FIR graph filters. We note once again that the CEV implementation offers the best approximation accuracy among the contenders achieving an NSE of order $10^{-4}$ in only $10$ exchanges. These results yield also different insights about the SIEV and SICEV graph filters.

First, both the SIEV and the SICEV implementations fail to compare well with the CEV, though the linear operator $\bm{A}$ is shift invariant. We attribute this degradation in performance to assumption (A.1) necessary for these filters to have a modal response interpretation. In fact, forcing each filter coefficient matrix to be shift invariant seems limiting the filter ability to match well the consensus operator.
	
Second, the different design strategies used in SIEV and SICEV further discriminate the two filters. We can see that the least squares design of the SICEV implementation is more beneficial, though the SIEV filter has more DoF. Unfortunately, this is the main drawback of the latter graph filter, which due to the nonconvexity of the design problem leads to suboptimal solutions. However, we remark that both these filters outperform (or compare equally with) the classical FIR filter. Further investigation in this direction is needed to understand if the SIEV, or SICEV structures can be used to achieve finite-time consensus as carried out in \cite{wang2010finite, sandryhaila2014finite}.

%In Fig.~\ref{fig:consensusFig}  the normalized approximation error, $e_n=\Vert\tilde{\bm{H}} - \bm H_{\rm{fit}}\Vert_F^2/\Vert\tilde{\bm{H}}\Vert_F^2$, is shown for the four shift invariant FIR filters. From this plot, it can be noticed that the CEV graph filter obtains the best the approximation accuracy among all the FIR filters. Furthermore, similar to the previous examples, even though the SIEV graph filter has many more degrees of freedom and a more general structure than the CEV graph filter, its achieved performance is similar to the classical FIR filter. Unfortunately, this is the main drawback of the more general (and complex) structure of the SIEV graph filter, which due to the lack of global optimality in the solution of the filter coefficients in the design problem, only local optimal solutions (suboptimal) can be obtained.

%	\begin{figure}[t]
%		  \centering
%		  \includegraphics[width=0.5\textwidth]{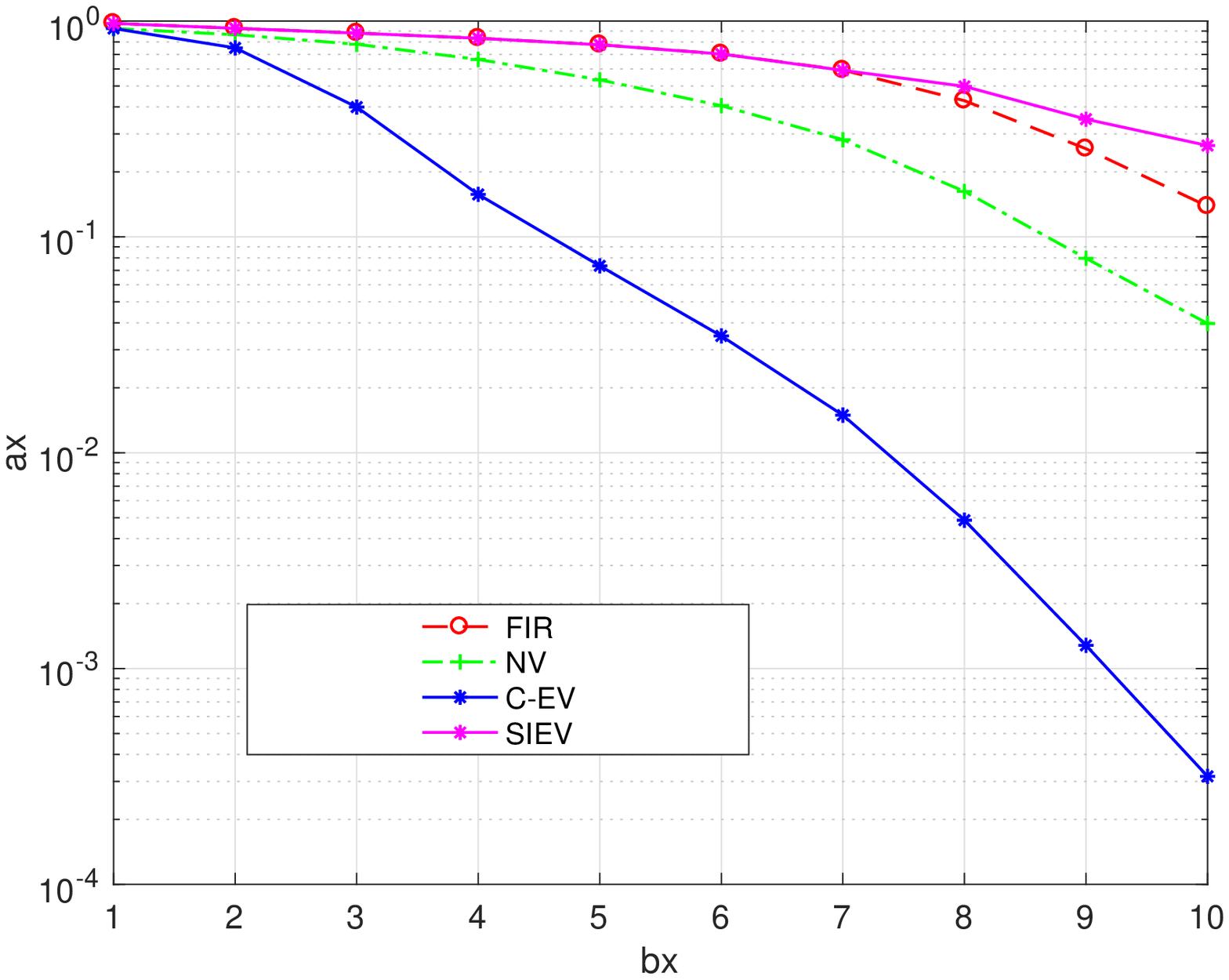}
%		  \caption{NMSE versus filter order for different distributed FIR implementations when approximating the consensus operator $\bm H = 1/N\bm 1\bm1^T$.}
%	  \label{fig:consensusFig}
%	\end{figure}
	\begin{figure}[t]
		  \centering
%		  \psfrag{ax}[bc][tc]{$\Vert \tilde{\H} - \H_{\rm fit}\Vert_{\rm F}/\Vert \tilde{\H} \Vert_{\rm F}$}
		  \psfrag{ax}[bc][tc]{NSE}
		  \psfrag{bx}[tc][bc]{Filter Order $[K]$}
		  \psfrag{Classical FIR}{\small Classical FIR}
		  \psfrag{CEV}{\small CEV FIR}
		  \psfrag{NV}{\small NV FIR}
		  \psfrag{SICEV}{\small SICEV FIR}
		  \psfrag{SIEV}{\small SIEV FIR}
		  \includegraphics[width=0.5\textwidth]{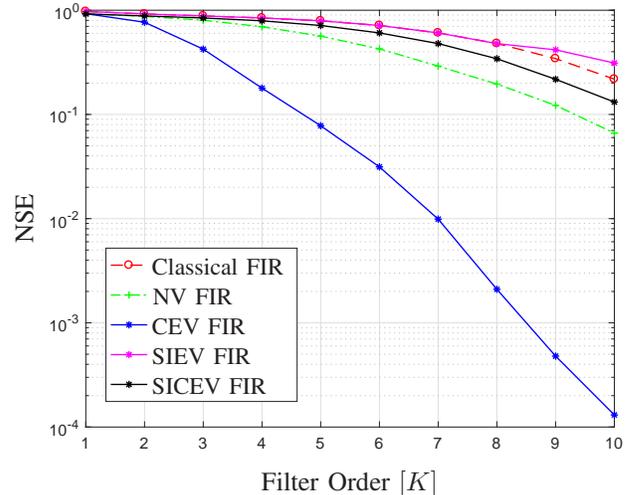}
		  %\caption{Error comparison between the proposed EV graph filters, the NV graph filter and the classical FIR for different orders when approximating a consensus operation, i.e., $\bm H = 1/N\bm 1\bm1^T$.}
		  \caption{NSE versus filter order for different distributed FIR filter implementations when approximating the consensus operator $\bm H = 1/N\bm 1\bm1^T$.}
	  \label{fig:consensusFig}
	\end{figure}
{
\textbf{Wiener-based denoising.} 
For instances when the statistics of the graph signal and noise signal are available, a typical approach for performing denoising is the Wiener filter. This filter is obtained by minimizing the mean-squared error, i.e.,
	\begin{equation}\label{eq.wienerProb}
		\tilde{\H} = \underset{\H\in\mathbb{R}^{N\times N}}{\rm argmin} \mathbb{E}\big[\Vert \H\bm z - \bm x\Vert_{2}^{2}\big],
	\end{equation}
where $\bm z = \bm x + \bm n$ is the graph signal corrupted with additive noise. For the case of zero-mean signals $\bm x$ and $\bm n$ with covariance matrices $\bm\Sigma_{\bm x}$ and $\bm\Sigma_{\bm n}$, respectively, the solution for~\eqref{eq.wienerProb} is 
	\begin{equation}\label{eq.wienerH}
		\tilde{\H} = \bm\Sigma_{\bm x}(\bm\Sigma_{\bm x}+\bm\Sigma_{\bm n})^{-1},
	\end{equation}
given $\bm\Sigma_{\bm x} + \bm\Sigma_{\bm n}$ is not singular. When the covariance matrices $\bm\Sigma_{\bm x}$ and $\bm\Sigma_{\bm n}$ share the eigenvectors with the graph shift operator, the optimal filter $\tilde\H$ can be applied through classical graph filters. However, in many instances, the signal covariance matrix $\bm\Sigma_{\bm x}$ is not diagonalizable by the eigenvectors of $\bm S$~\cite{marques2017stationary}. When $\tilde{\H}$ is not jointly diagonalizable, a typical approach is to consider only the diagonal entries of its projection onto the eigenvectors of the shift operator, $\bm{D} = {\bm U^{-1}\tilde{\H}\bm U}$. Then, a filter $\bar{\H} = \bm U\diagg{[\bm D]_{11},\ldots,[\bm D]_{NN}}\bm U^{-1}$ is used instead of $\tilde{\H}$ as an approximation. For cases where $\bm D$ is approximately diagonal this is a good way to approximate the Wiener filter in a distributable manner. However, for general matrices $\tilde{\H}$ this is not a necessary good approach. 

We illustrate an example where instead of approximating the Wiener filter through a classical FIR graph filter, we employ a CEV FIR filter. For this example we consider
the Molene dataset\footnote{Access to the raw data through the link {donneespubliques.meteofrance.fr/donnees\_libres/Hackathon/RADOMEH.tar.gz}}, where the temperature data of several cities in France has been recorded. The graph employed is taken from~\cite{perraudin2017stationary} and the graph signal has been corrupted with white Guassian noise. The results in terms of NSE for the different fitted graph filters are shown in Fig.~\ref{fig.wienerFilt}. From this plot we observe that the CEV FIR filter outperforms all the other alternatives. This is due to the fact that the optimal Wiener filter is not jointly diagonalizable with the eigenbasis of the shift operator, i.e., covariance matrix of data is not shift invariant, hence classical graph filters are not appropriate to approximate the filter.

\begin{figure}[t]
		  \centering
%		  \psfrag{error}[bc][tc]{$\Vert \tilde{\H} - \H_{\rm fit}\Vert_{\rm F}/\Vert \tilde{\H} \Vert_{\rm F}$}
		  \psfrag{error}[bc][c]{NSE}		  
		  \psfrag{Filter Order [K]}[cc][bc]{Filter Order $[K]$}
		  \psfrag{Classical FIR}{\small Classical FIR}
		  \psfrag{CEV FIR}{\small CEV FIR}
		  \psfrag{NV FIR}{\small NV FIR}
		  \includegraphics[width=0.5\textwidth]{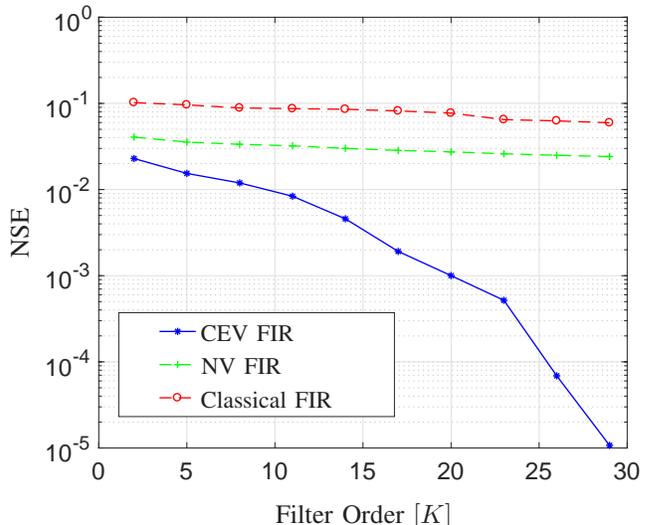}
		  %\caption{Error comparison between the proposed EV graph filters, the NV graph filter and the classical FIR for different orders when approximating a consensus operation, i.e., $\bm H = 1/N\bm 1\bm1^T$.}
		  \caption{NSE versus filter order for different distributed FIR filter implementations when approximating the Wiener Filter for the Molene temperature dataset.}
	  \label{fig.wienerFilt}
	\end{figure}
}	
\textbf{Distributed Beamforming.}
We here consider the task of applying a beamforming matrix $\bm W^\herm$ to signals acquired on a distributed array. More specifically, we aim at obtaining the output
		\begin{equation} \label{eq:beamForming}
			\bm y = \bm W^{\herm}\bm x,
		\end{equation}
where $\bm x$ is the data acquired in a distributed way. Since $\bm W^\herm$ might often be a dense matrix, e.g., in zero-forcing beamforming, operation \eqref{eq:beamForming} cannot be readily distributed. To obtain the output at each node, we approximate the beamforming matrix with different graph filters. 

We quantify this scenario in a distributed $2$D sensor array. The network is generated using $N=40$ random locations on a $2$D plane where the communication network is an $8$-nearest neighbors graph. The beamforming matrix is the matched filter~\cite{harry2002optimum} matrix for a uniform angular grid of $N=40$ points in the range $(-180,180]$. In other words, every node will see the information from a small sector of approximately nine degrees. Since in general $\bm W^{\herm}$ does not share the eigenbasis with $\bm S$, classical graph filters fail to address this task. Therefore, here we compare only the CEV FIR filter and the NV FIR filter. Fig.~\ref{fig:beamPattern} shows two output beampatterns obtained by solving~\eqref{eq:probLinop} with $\bm A = \bm W^\herm$ for the two considered filters with order $K = 5$. We notice that the CEV outperforms the NV FIR filter as it follows more closely the desired beampattern. 

Note that the above framework treats the distributed beamforming differently from approaches based on distributed optimization tools~\cite{sherson2016distributed}. The latter methods usually aim at computing the beamforming matrix (i.e., the weighting matrix is data dependent) and then perform consensus. On the other hand we assume that $\bm W^\herm$ is fixed and that it must be applied to the array data. However, this problem can also be solved through distributed convex optimization tools by solving the least squares problem
		\begin{equation}\label{eq:cvxDist}
			\begin{array}{ll}
				\underset{\bm y}{\text{minimize}} & \Vert \bm x - (\bm W^{\herm})^{\dagger}\bm y \Vert_2^2.
			\end{array}
		\end{equation}
		Differently from~\eqref{eq:probLinop}, formulation avoids the computation of the pseudo-inverse and the graph-filtering based approach requires only five iterations to compute the final beampattern.
		
		In the upcoming section, we compare the CEV and the NV graph filters with distributed optimization tools in solving a general inverse problem.

		% \begin{figure}[t]
		%   \centering
		%   \includegraphics[width=0.5\textwidth]{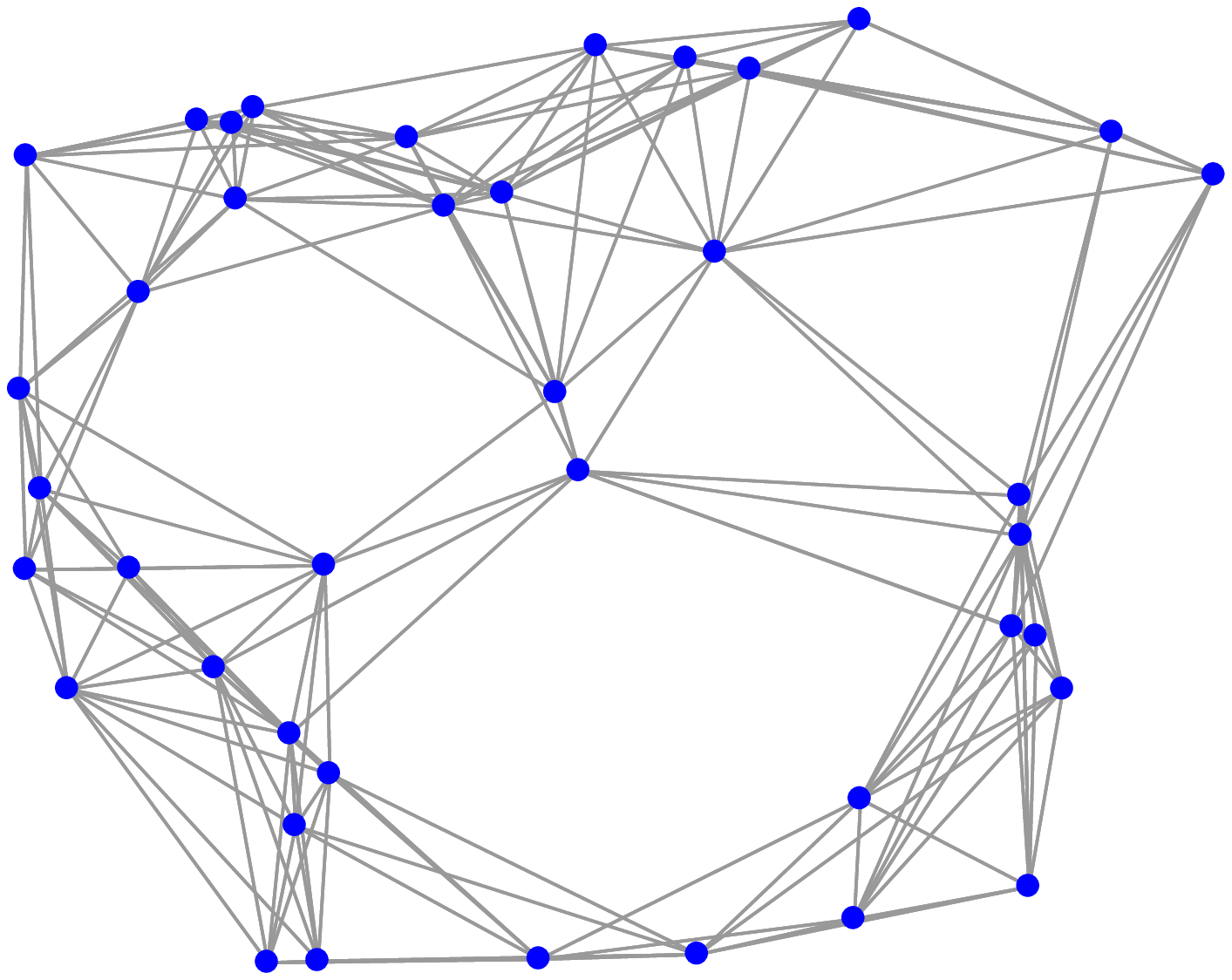}
		%   \caption{Distributed network example with $N=40$. Sensors are located at random locations and the communication graph is generated as with its kNN graph, with $k=8$.}
	 %  	  \label{fig:networkBeamForming}
		% \end{figure}

%	\begin{figure}[t]
%		  \centering
%		  \begin{subfigure}[b]{\linewidth}
%		  	\psfrag{db}[bc][tc]{\small dB$_{10}$}
%		  	\includegraphics[width=\textwidth]{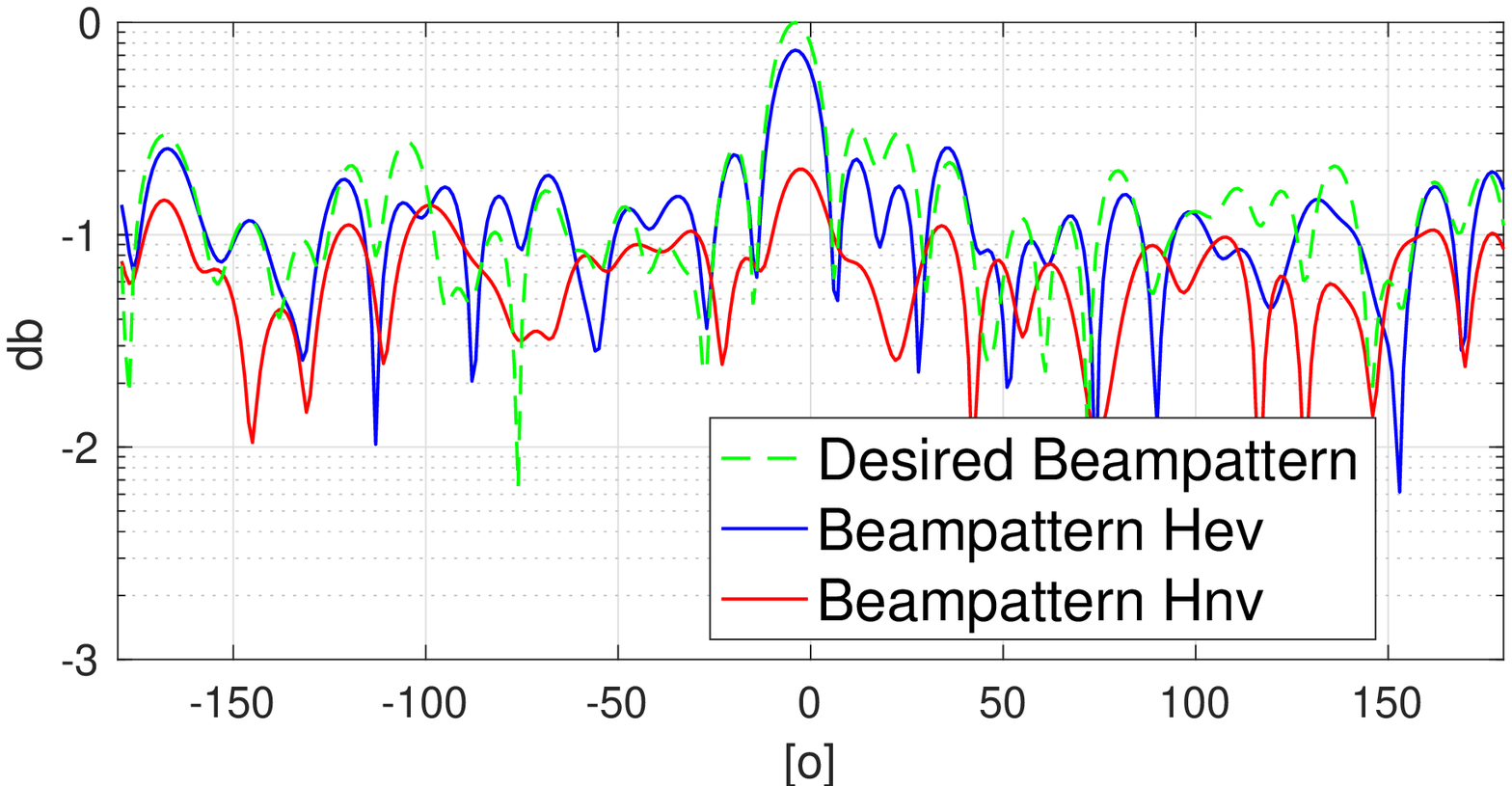}
%		  	\caption{}
%		  \end{subfigure}
%		  \begin{subfigure}[b]{\linewidth}
%		    \psfrag{db}[bc][tc]{\small dB$_{10}$}
%		  	\includegraphics[width=\textwidth]{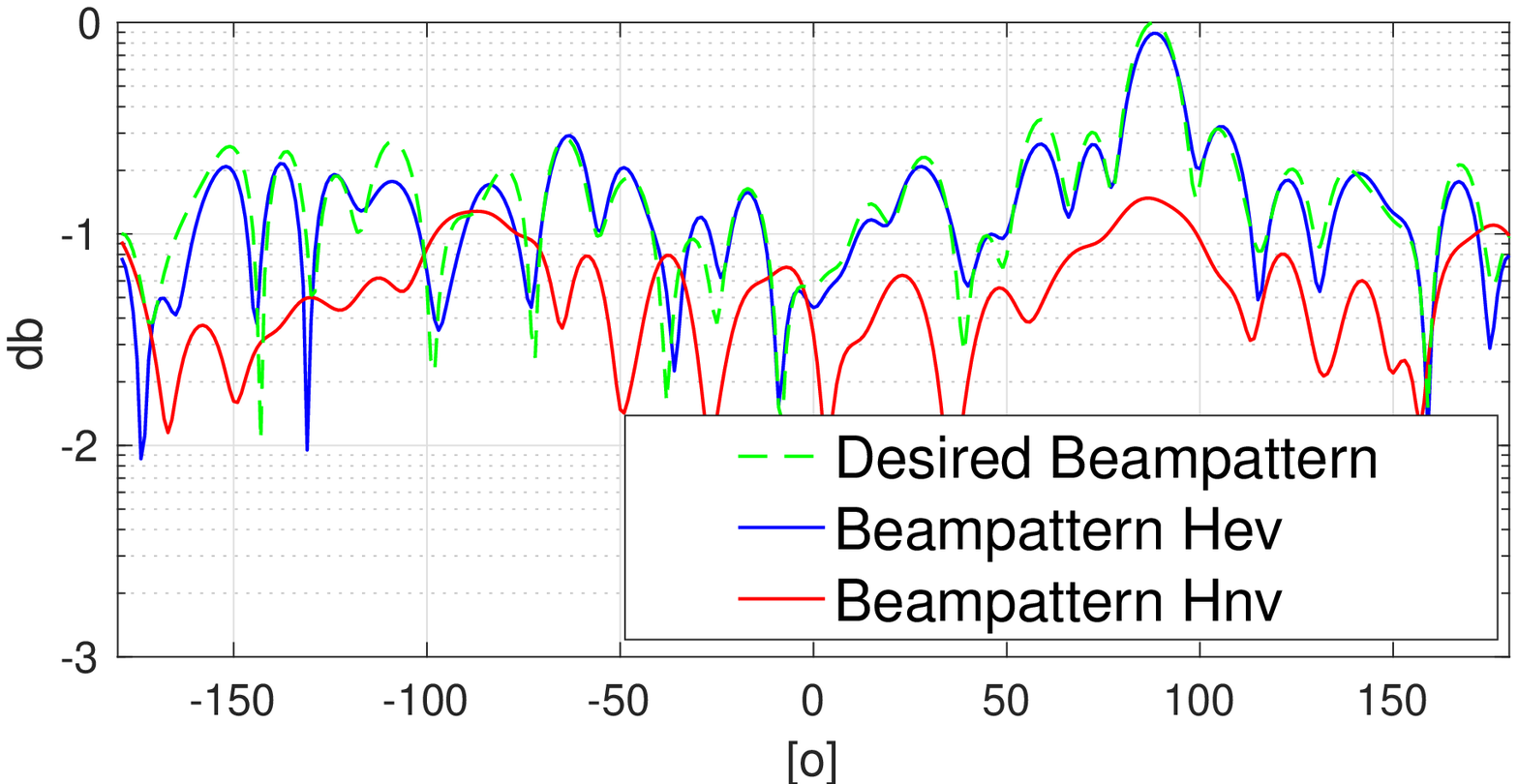}
%		  	\caption{}
%		  \end{subfigure}
%		  \caption{Beampatterns comparision for different node outputs and the desired steering pattern, $\theta_0$. Angle of interest (a) $\theta_0 = 0^o$, (b) $\theta_0 = 90^o$. }
%	  \label{fig:beamPattern}
%	\end{figure}	
%	    %-+-+-+-+-+-+-+-+-+-+-+-+-+-+-+-+-+-+-+-+-+-+-+-+-+-+-+-+-+-+-+-+-+-+-+-+-+-+-
\begin{figure}[t]
		  \centering
		  \begin{subfigure}[b]{\linewidth}
		  	\psfrag{db}[bc][tc]{\small dB$_{10}$}
		  	\psfrag{Desired Beampattern}{\small Target Beampattern}
		  	\psfrag{Beampattern Hev}{\small CEV FIR Output}
		  	\psfrag{Beampattern Hnv}{\small NV FIR Output}
		  	\psfrag{[o]}[cc][bc]{Angle $[^o]$}
		  	\includegraphics[width=\textwidth]{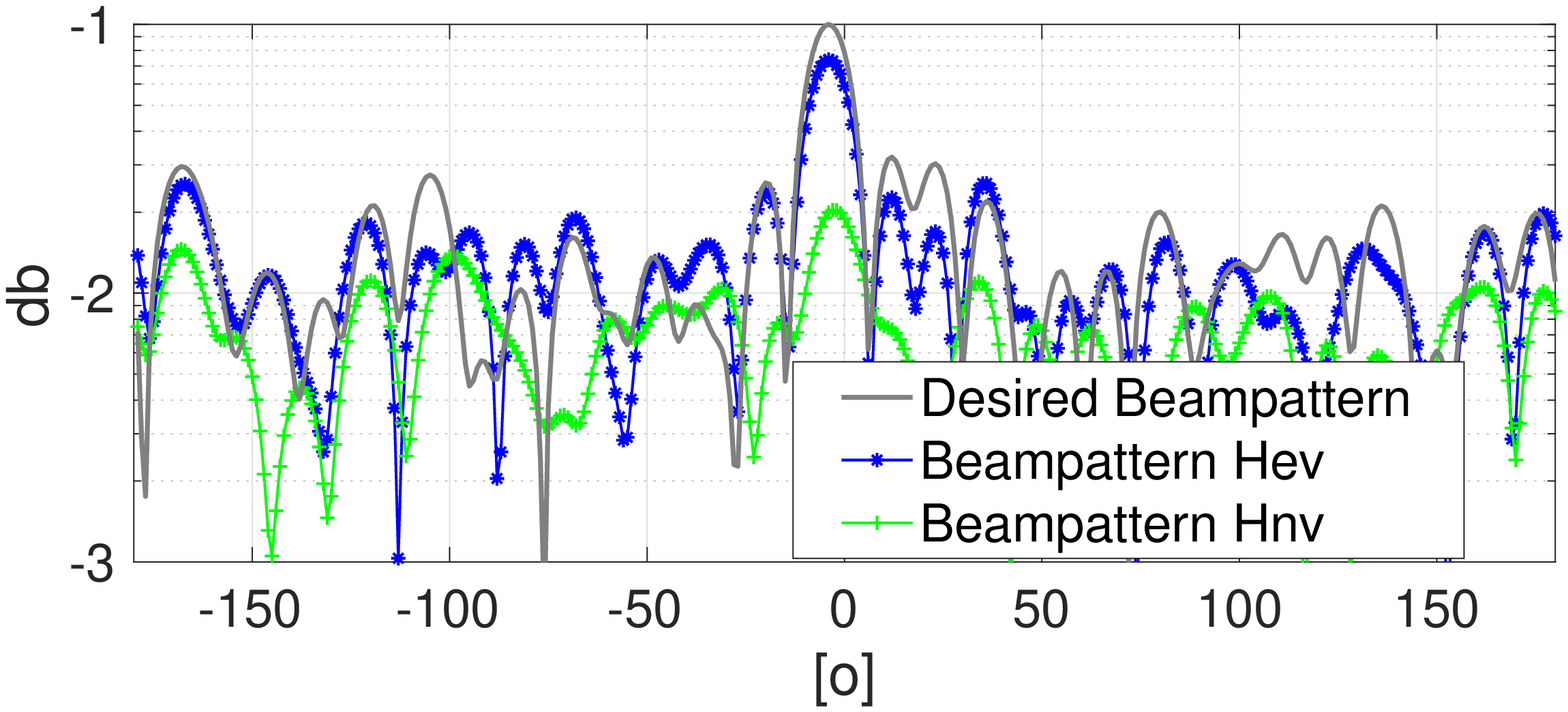}
		  	\caption{}
		  \end{subfigure}
		  \begin{subfigure}[b]{\linewidth}
		  	\psfrag{db}[bc][tc]{\small dB$_{10}$}
		  	\psfrag{[o]}[cc][bc]{Angle $[^o]$}
			\psfrag{Desired Beampattern}{\small Target Beampattern}
		  	\psfrag{Beampattern Hev}{\small CEV FIR Output}
		  	\psfrag{Beampattern Hnv}{\small NV FIR Output}
		  	\includegraphics[width=\textwidth]{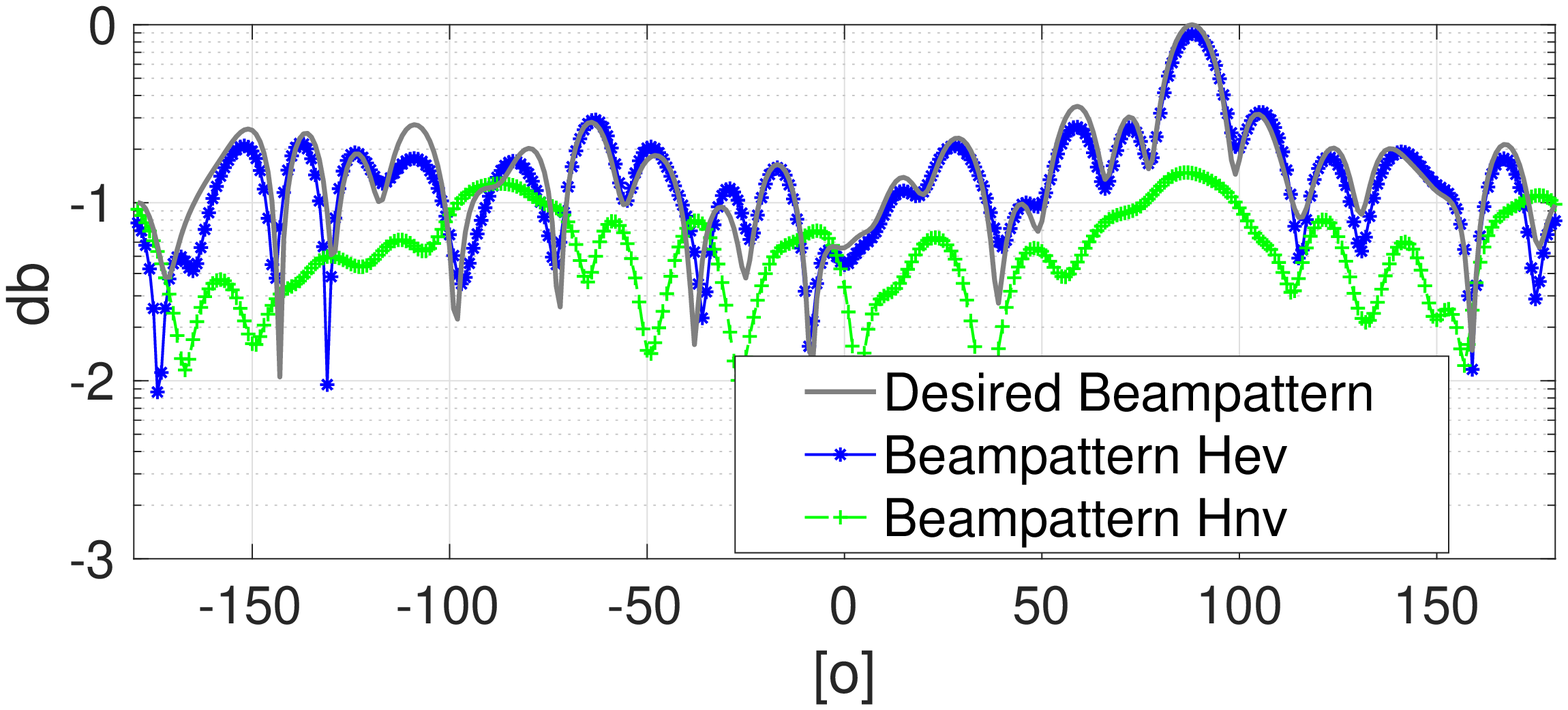}
		  	\caption{}
		  \end{subfigure}
		  \caption{Comparison of beampatterns for different node outputs and desired steering angle, $\theta_0$. Beampattern for beamformer focusing at (a) $\theta_0 = 0^o$ and (b) $\theta_0 = 90^o$. }
	  \label{fig:beamPattern}
	\end{figure}	
  \subsection{Comparison with Distributed Optimization}  
  
  We now compare the proposed graph filters with the primal dual method of multipliers (PDMM)\footnote{PDMM is an alternative distributed optimization tool to the classical alternating direction method of multipliers (ADMM), which is often characterized by a faster converge \cite{zhang2018distributed}.} \cite{zhang2018distributed} solving distributively the least squares problem
  \begin{equation}
  	\begin{array}{ll}
  		\underset{\bm x}{\text{minimize}} & \Vert \bm y - \bm A\bm x \Vert_2^2
  	\end{array}
  \end{equation}	
for a matrix $\bm A$. Without loss of generality we consider $\bm A$ to be an $N \times N$ matrix. The baseline assumption for all distributed optimization methods is that $v_i$ knows its own regressor, i.e., the $i$th row of $\bm A$, $\bm a_i^\transp$. The task is that each node retrieves the full vector ${\bm x}_{\text{ls}} = \bm A^{\dagger}\bm y$ by means of local communications.
 
For the graph filter-based approaches, we approximate $\bm A^{\dagger}$ through a set of rank one matrices $\{\tilde{\bm H}_{i} \triangleq \bm 1\tilde{\bm a}_{i}^{\transp}\}_{i=1}^{N}$ with $\tilde{\bm a_{i}}$ being the $i$th row of $\bm A^{\dagger}$. This means that in contrast to distributed optimization methods, here every node $v_i$ needs to know the full $\bm A$. Each $\tilde{\bm H}_{i}$ is then fitted with the NV and CEV recursions to approximate $\bm x_{\text{ls}}$ as the output after filtering the graph signal $\bm y$. It must be noticed that the number of communications between adjacent nodes does not scale with $N$. In fact, both the NV and the CEV will shift the signal only $K$ times and the nodes can locally apply the respective coefficients to obtain the outputs.

To quantify the performance, we perform $100$ Monte Carlo simulations with a randomly generated system matrix and solution vector. Fig.~\ref{fig:distOptComp} compares the graph filter approaches with the distributed optimization methods in terms of the $\text{NSE} = \Vert \bm x - \hat{\bm x}^{(k)} \Vert_2^{2}/\Vert\bm x\Vert_2^{2}$. The graph filter methods achieve a faster decay compared to the distributed optimization method in the first hundred iterations. However, since perfect approximation of the desired response is not possible both graph filters exhibit an error floor. PDMM, on the other hand, does not run into this issue and guarantees convergence to the true solution. Despite this difference in performance, the graph filter approaches can be employed for cases where the accuracy requirements are not strict, or as \emph{warm starts} for the distributed optimization methods.  The above comparison, besides proposing graph filters as an alternative for solving distributed least squares problems, raises the question on \emph{how graph filters relate to distributed convex optimization}. In fact, further research is needed to relate the design and implementation of distributed EV graph filters with the well-established theory of distributed optimization.

%We believe that this question provides a new venue for deeper research and improvements for the design/implementation of graph filters which directly spans from treating graph filters under the tools and theory of distributed optimization, a direction that until this moment has not been explored.
%
%Notice that even though in this particular example we consider the number of unknowns to be equal to the number of nodes, i.e., $\bm x$ is an $N$-dimensional vector, in general, this does not have to be the case. 

  	\begin{figure}[t]
		  \centering
		  \psfrag{error}[bc][tc]{\small $\Vert\x - \hat{\x}^{(k)}\Vert_2^2/\Vert\bm x\Vert_2^2$}
		  \psfrag{Objective Error Versus Iterations}{}
		  \psfrag{Dist. CVX Opt.}{\fontsize{8}{8}\selectfont{Dist. CVX Opt.}}
		  \psfrag{CEV-GF}{\fontsize{8}{8}\selectfont{CEV FIR}}
		  \psfrag{NV-GF}{\fontsize{8}{8}\selectfont{NV FIR}}
		  \psfrag{Number of Iterations}[tc][cc]{Iteration $[K]$}
		  \includegraphics[width=0.5\textwidth]{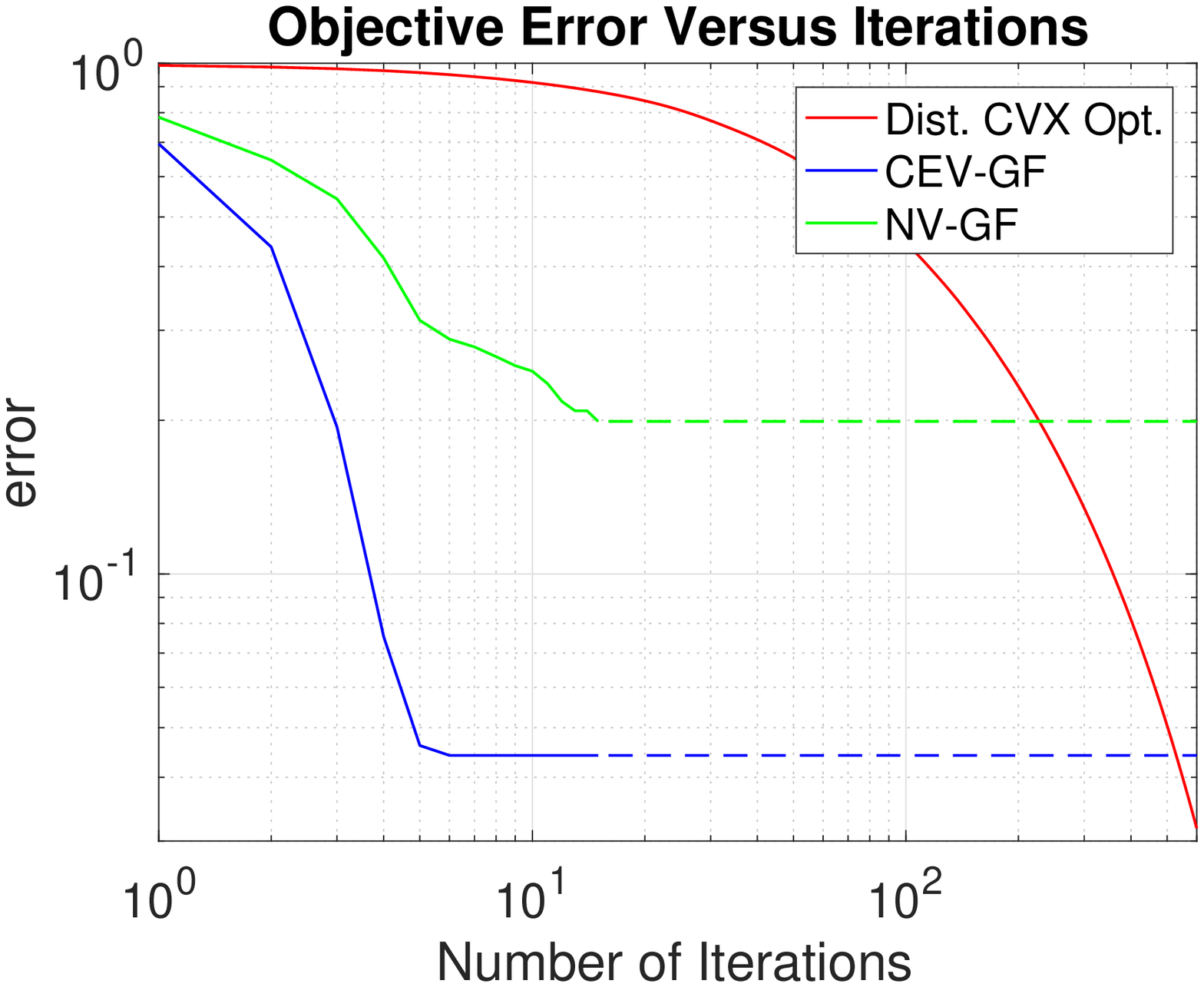}
		  \caption{Convergence error versus the number of iterations for the NV and the CEV graph filters and for the PDMM solver \cite{zhang2018distributed}. Dashed lines indicates the saturation floor of the NV and CEV FIRs.}
	  \label{fig:distOptComp}
	\end{figure}

%   	\paragraph*{Wiener-based denosing} When the statistical properties of both noise and signal on the graph are available, it common to employ a Wiener filter for denoising~\cite{hayes2009statistical,perraudin2017stationary,yaugan2016spectral}. That is, we apply a linear filter such that it minimizes the mean-squared error (MSE), i.e.,
% \begin{equation}
% 	\bm H^{*} = \underset{\bm H\in\mathbb{R}^{N\times N}}{\text{argmin}}\;\;\mathbb{E}\{\Vert\bm H \bm x - \bm z\Vert_{2}^{2}\}
% \end{equation}
% where $\bm z = \bm x + \bm n$ is the graph signal which has been corrupted with additive noise.

    %-+-+-+-+-+-+-+-+-+-+-+-+-+-+-+-+-+-+-+-+-+-+-+-+-+-+-+-+-+-+-+-+-+-+-+-+-+-+-
  \subsection{Tikhonov-based denoising}
  
One of the central problems in GSP is that of recovering an unknown signal $\x$ from a noisy realization $\bm z = \bm x + \bm n$ given that $\bm x$ is smooth w.r.t. the underlying graph \cite{shuman2013emerging}. Differently  known as the Tikhonov denoiser, the estimation of $ \bm x$ can be obtained by solving the regularized least squares problem

%of smoothing a noisy corrupted signal. Differently known as the Tikhonov denoiser \cite{shuman2013emerging}, this problem consists of recovering a signal $\x$ from a noisy realization $\bm z = \bm x + \bm n$ given that $\bm x$ is smooth w.r.t. the underlying graph \cite{shuman2013emerging}. $\bm n$ here denotes the additive noise vector.
  %
  %
  %
%  Consider the problem of recovering an unknown signal $\x$ from a noisy realization $\bm z = \bm t + \bm n$, where $\bm n$ is the additive noise vector. When we consider that $\bm x$ is a graph signal, and that it is smooth w.r.t. a given graph~\cite{dong2016learning} represented by $\bm S$, the denoising can be formulated as the following regularized least squares problem:
%
\begin{equation}\label{eq:tik}
\begin{aligned}
{\bm x}^* = \underset{\bm x\in\mathbb{R}^{N}}{\text{arg min}} & \Vert \bm z - \bm x \Vert_{2}^{2} + \mu\bm x^{T}\bm S\bm x,
\end{aligned}
\end{equation}
for $\bm S = \bm L$ and where $\mu$ trades off the noise removal with the smoothness prior. Problem \eqref{eq:tik} has the well-known solution ${\bm x}^* = (\bm I + \mu\bm S)^{-1}\bm z$, which in terms of the terminology used in Section~\ref{sec:prem} is an ARMA$_1$ graph filter with $\varphi=1$ and $\psi=-\mu$ (see also \cite{isufi2017autoregressive} for further analysis). While recursion \eqref{eq:ARMA} can implement this problem distributively, the convergence of the Neumann series in~\eqref{eq:ARMAss} cannot be controlled as the rate is fixed by $\vert \mu\vert\lambda_{\max}\{\bm S\}$.

Here, we show that through the EV ARMA$_1$~\eqref{eq:ARMA_EV} it is possible to improve the convergence speed of the ARMA$_1$ graph filter by exploiting the additional DoF given by the edge-weighting matrices $\{\Phib_0,\Phib_1\}$. However, since now the design is not exact and involves the modified  error [cf. \eqref{eq:probProny1}], this speed benefit will come at the expense of accuracy.
To illustrate this, we consider an example of problem~\eqref{eq:tik} with $\mu = 0.8$ and $\bm S = \lambda_{\max}^{-1}(\bm L)\bm L$, such that $\bm S$ has unitary spectral norm. Fig.~\ref{fig:armaComp} shows the convergence error of the EV ARMA$_1$ for different values of $\delta$ in \eqref{eq:probProny1} and compares it with the classical ARMA$_1$ and the CEV of order $K = 15$.

We make the following observations. First, low values of $\delta$ are preferred to improve the convergence speed. However, values below $0.7$ should in general be avoided since this restricts too much the feasible set of \eqref{eq:probProny1}, hence leading to a worse approximation error. Second, values of $\delta \approx 0.7$ seem to give the best tradeoff, since the convergence speed is doubled w.r.t the ARMA$_1$ and the approximation error is close to machine precision. Additionally, the fact that the solution $\delta = 0.7$ achieves a better performance than the solution with $\delta = 0.8$ arises from the fact that due to the two-step procedure, the solution obtained by minimizing the modified error might not lead to the best matrix for minimizing the true error during the second step.
%Third, compared to the CEV, the EV ARMA$_1$ is an alternative only when high accuracy (i.e., an error below $10^{-10}$) is required. In fact, even higher order of CEV do not improve much the precision of this task w.r.t. $K = 15$. 
Finally, we did not plot the classical FIR filter for solving this problem, since its performance is identical to the ARMA$_1$ for the same distributed costs \cite{isufi2017autoregressive}.

%From Fig.~\eqref{fig:armaComp} we can observe that for lower values of $\rho$ the decay rate of the error increases. However, due to the restriction in the spectral norm of $\Phib_0$ for some low values of $\rho$ error saturates at a high level. Despite this, we observe that for a $\rho = 0.7$, the EV-ARMA$_1$ achieves a negligible error (close to machine precision) almost $100$ iterations before that the classical ARMA$_1$. As a result, depending on the required accuracy, by appropriate design of the matrices $\{\Phib_0,\Phib_1\}$ graph filters achieving faster steady-state response can be obtained. Improvements on the saturation point of the steady-state response are possible by further refinement during the design stage, e.g., iterative methods or alternative approaches not involving the modified error.

  	\begin{figure}[t]
		  \centering
		  \psfrag{p=0.6}{\small$\delta=0.6$}
		  \psfrag{p=0.7}{\small$\delta=0.7$}
		  \psfrag{p=0.8}{\small$\delta=0.8$}
		  %\psfrag{0}{$0$} 		  \psfrag{200}{$200$}		  \psfrag{400}{$400$}		  
		  %\psfrag{600}{$600$}		  \psfrag{800}{$800$} 	\psfrag{1000}{$1000$}
%		  \psfrag{error}[bc][tc]{\small $\Vert \tilde{\H} - \H_{{\rm fit}_1}^{(k)} \Vert_{\rm F}^2/\Vert \tilde{\H}\Vert_{\rm F}^2$}
		  \psfrag{error}[bc][tc]{NSE}
		  \psfrag{Classical ARMA}{\small{Classical ARMA$_1$}}
		  \psfrag{CEV  K = 15}{\small{CEV FIR, $K=15$}}
		  \psfrag{Number of Iteration [n]}[cc][bc]{Iteration $[K]$}
		  \includegraphics[width=0.5\textwidth]{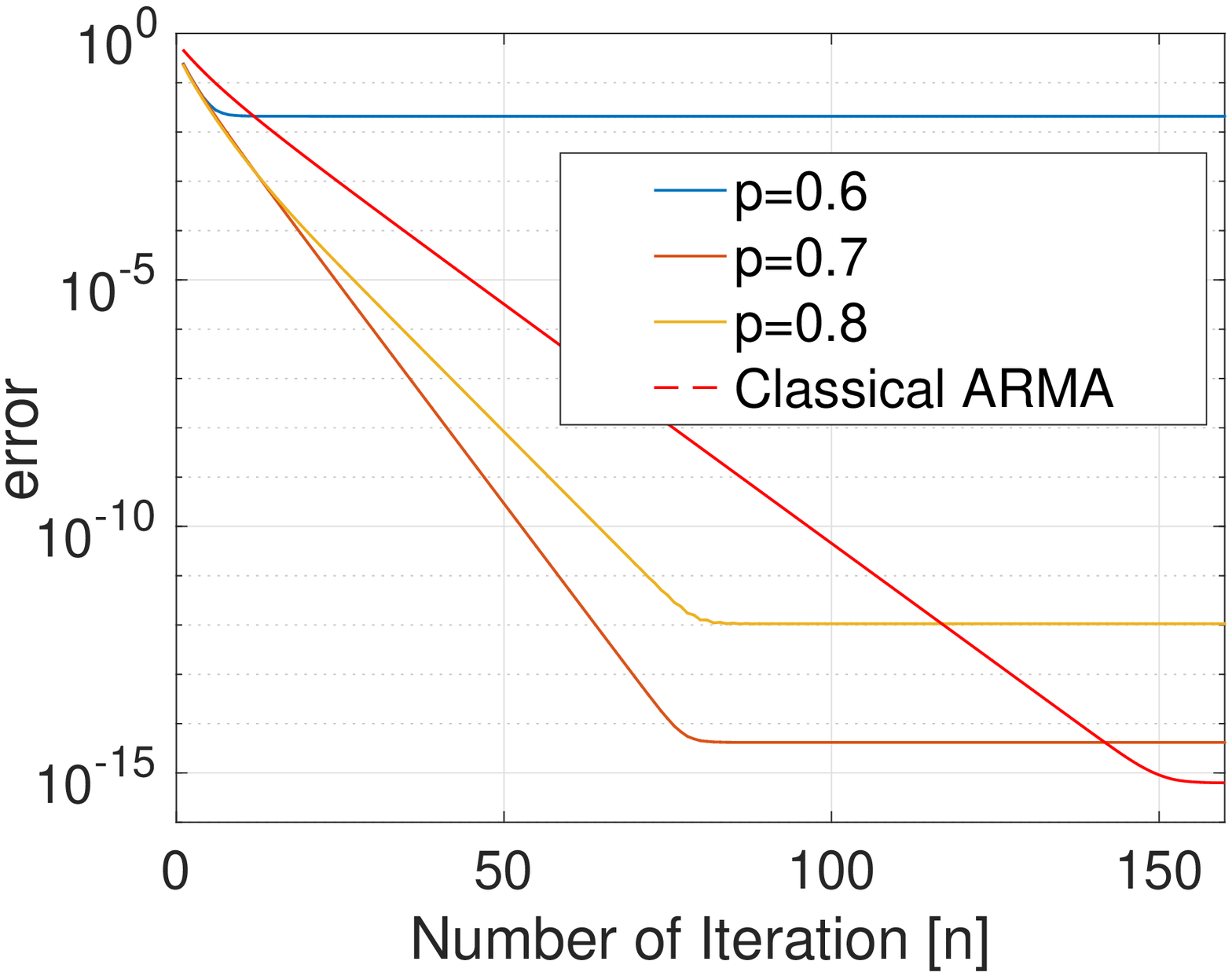}
		  \caption{Convergence error versus the number of iterations for the Tikhonov denoising problem. The EV ARMA$_1$ results are plotted for different values of $\delta$ in \eqref{eq:probProny1} to highlight the tradeoff between convergence speed and approximation accuracy.}
	  \label{fig:armaComp}
	\end{figure}
 %  %-+-+-+-+-+-+-+-+-+-+-+-+-+-+-+-+-+-+-+-+-+-+-+-+-+-+-+-+-+-+-+-+-+-+-+-+-+-+-
 %  \subsection*{Dictionary Learning}
	
	% \begin{figure}[t]
	% 	  \centering
	% 	  \includegraphics[width=0.5\textwidth]{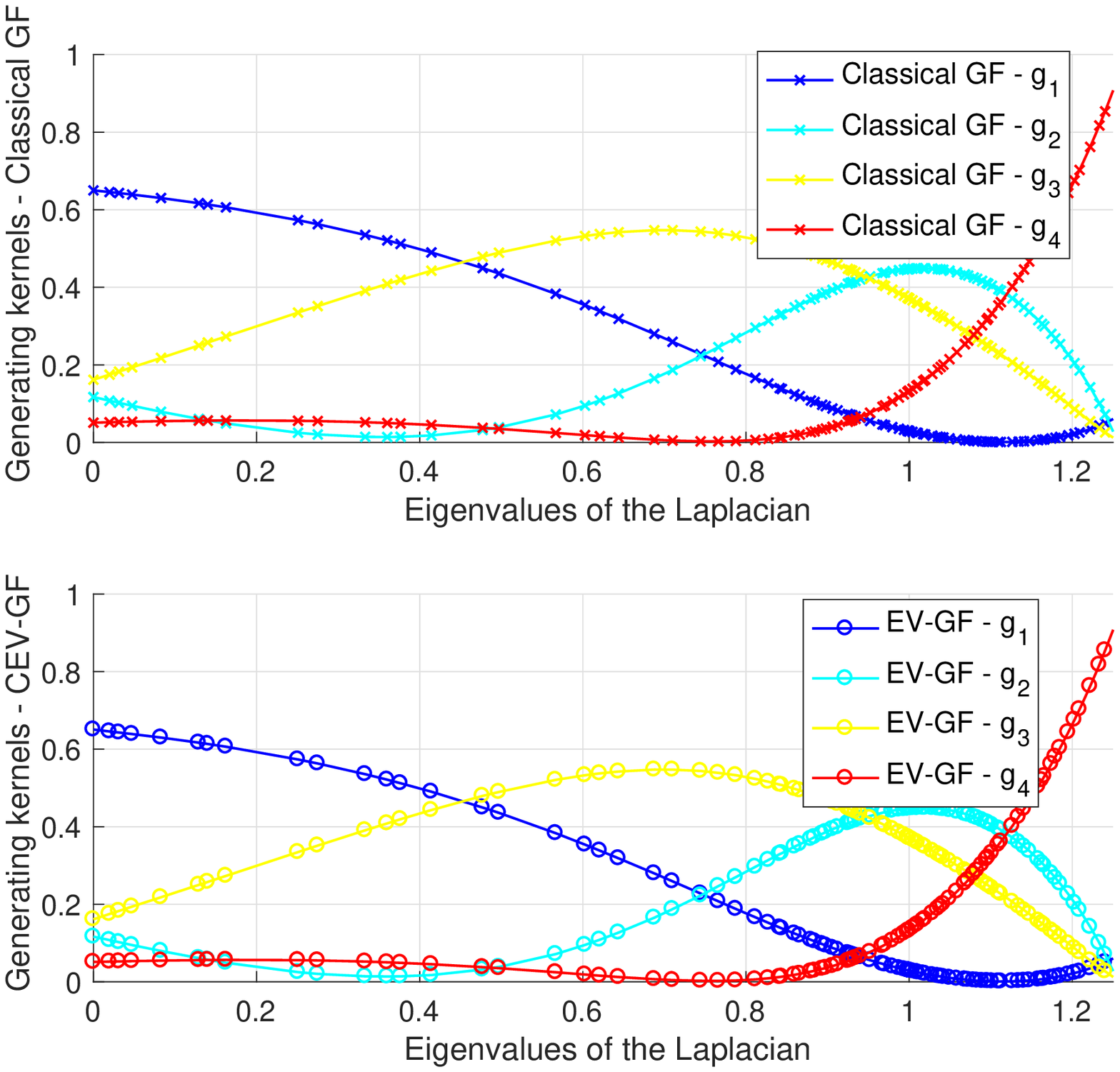}
	% 	  \caption{Asas}
	%   \label{fig:dictLearning}
	% \end{figure}
 %  %\lipsum[1-12]
 %  %-+-+-+-+-+-+-+-+-+-+-+-+-+-+-+-+-+-+-+-+-+-+-+-+-+-+-+-+-+-+-+-+-+-+-+-+-+-+-
%===============================================================================
\section{Conclusions}
\label{sec:con}
In this work, a generalization of the distributed graph filters was proposed. These filters, that we referred to as edge-variant graph filters, have the ability to assign different weights to the information coming from different neighbors. Through the design of edge-weighting matrices, we have shown that it is possible to weigh, possibly in an asymmetric fashion, the information propagated in the network and improve the performance of state-of-the-art graph filters. 

By introducing the notion of filter modal response, we showed that a subclass of the edge-variant graph filters have a graph Fourier interpretation that illustrates the filter action on the graph modes. Despite that the most general edge-variant graph filter encounters numerical challenges in the design phase, a constrained version of it was introduced to tackle this issue. The so-called constrained edge-variant graph filter still enjoys a similar distributed implementation, generalizes the state-of-the-art approaches, and is characterized by a simple least squares design. For the constrained version, we also showed that there exists a subclass which has a modal response interpretation.

Finally, we extended the edge-variant idea to the family of IIR graph filters, particularly to the ARMA$_1$ graph filter. We showed that by adopting the same local structure a distributed rational filter can be achieved, yet with a much faster convergence speed. Several numerical tests corroborate our findings and show the potential of the proposed filters to improve state-of-the-art techniques.

Future research in this direction should concern the following points: $i)$ improve the design strategy for the more general edge-variant version; $ii)$ improve the saturation accuracy of the proposed methods when dealing with a distributed implementation of linear operators; $iii)$ conciliate the world of GSP with that of distributed optimization and exploit the latter to design distributed graph filters; and $iv)$ extend the edge-variant concept beyond the ARMA$_1$ implementation to the global family of IIR graph filters.
%===============================================================================
% \bibliographystyle{IEEEtran}
% \bibliography{IEEEabrv,bibFile_MACM}
\vspace{-3ex}
\bibliographystyle{IEEEtran}
\bibliography{dissertation}

\end{document}